\documentclass[10pt,letterpaper]{article}
\usepackage[utf8]{inputenc}
\usepackage[bookmarks]{hyperref}
\hypersetup{colorlinks=true,citecolor=blue,linkcolor=blue,filecolor=blue,urlcolor=blue}

\usepackage{fullpage}
\usepackage{tikz}
\usepackage{circuitikz}

\usepackage[affil-it]{authblk}

\usepackage{amsmath,amsfonts,amssymb,amsthm,mathtools}

\usepackage{cleveref}

\newtheorem{thm}{Theorem}\crefname{thm}{Theorem}{Theorems}
\newtheorem{prop}[thm]{Proposition}\crefname{prop}{Proposition}{Propositions}
\crefname{lem}{Lemma}{Lemmas}
\newtheorem{cor}[thm]{Corollary}\crefname{cor}{Corollary}{Corollaries}
\newtheorem*{thm-main}{Theorem}\crefname{thm}{Theorem}{Theorem}
\theoremstyle{definition}
\newtheorem{rem}[thm]{Remark}\crefname{ref}{Remark}{Remarks}

\usepackage[backend=bibtex8,sorting=none,firstinits=true,doi=false,isbn=false,url=false,maxbibnames=6,maxcitenames=2,style=nature]{biblatex}
\renewbibmacro{in:}{}
\addbibresource{references.bib} 

\usepackage{graphicx}
\usepackage[labelformat=simple]{subcaption}

\usepackage{tikz}
\usepackage{pgfplots}

\usepackage{xifthen}

\linespread{1.13} 

\newcommand{\cA}{\mathcal{A}}
\newcommand{\cB}{\mathcal{B}}
\newcommand{\cC}{\mathcal{C}}

\newcommand{\cM}{\mathcal{M}}
\newcommand{\cN}{\mathcal{N}}

\newcommand{\cR}{\mathcal{R}}
\newcommand{\cS}{\mathcal{S}}
\newcommand{\cT}{\mathcal{T}}

\newcommand{\cX}{\mathcal{X}}
\newcommand{\cY}{\mathcal{Y}}
\newcommand{\cZ}{\mathcal{Z}}

\newcommand{\ox}{\otimes}
\newcommand{\eps}{\varepsilon}
\DeclareMathOperator{\supp}{supp}
\DeclareMathOperator{\poly}{poly}
\DeclareMathOperator{\id}{id}
\DeclareMathOperator{\prob}{Pr}
\newcommand{\dtv}{d_{\mathrm{TV}}}
\DeclareMathOperator{\tr}{tr}
\newcommand{\CeaO}[1][]{
	\ifthenelse{\isempty{#1}}
	{\cC^{(1)}_{\mathrm{ea}} }
	{\cC^{(1)}_{\mathrm{ea},#1}}
}
\newcommand{\Cea}[1][]{\cC_{\text{ea},#1}}

\newcommand{\Csum}{S}
\newcommand{\CsumEA}[1][]{S_{\mathrm{ea},{#1} } }

\newcommand{\MS}{\mathrm{MS}}
\newcommand{\SV}{\mathrm{SV}}
\newcommand{\LS}{\mathrm{LS}}
\newcommand{\Ha}{\mathrm{H}}
\newcommand{\Nms}{N_{G_{\MS}}}
\newcommand{\ii}{\mathrm{i}}

\title{Playing Games with Multiple Access Channels}
\author[a,b]{Felix Leditzky\thanks{Corresponding author; email: \texttt{felix.leditzky@jila.colorado.edu}}}
\author[a,c]{Mohammad A. Alhejji}
\author[a,c]{Joshua Levin}
\author[a,b,c]{Graeme Smith}
\affil[a]{\small JILA, University of Colorado/NIST, Boulder, CO 80309, USA}
\affil[b]{\small Center for Theory of Quantum Matter, University of Colorado, Boulder, CO 80309, USA}
\affil[c]{\small Department of Physics, University of Colorado, Boulder, CO 80309, USA}

\begin{document}
\maketitle

\begin{abstract}
	Communication networks have multiple users, each sending and receiving messages.  A multiple access channel (MAC) models multiple senders transmitting to a single receiver, such as the uplink from many mobile phones to a single base station. The optimal performance of a MAC is quantified by a capacity region of simultaneously achievable communication rates. We study the two-sender classical MAC, the simplest and best-understood network, and find a surprising richness in both a classical and quantum context.  First, we find that quantum entanglement shared between senders can substantially boost the capacity of a classical MAC.  Second, we find that optimal performance of a MAC with bounded-size inputs may require unbounded amounts of entanglement.  Third, determining whether a perfect communication rate is achievable using finite-dimensional entanglement is undecidable. Finally, we show that evaluating the capacity region of a two-sender classical MAC is in fact NP-hard.
\end{abstract}

\section{Introduction}

Information theory is the mathematical theory of communication and signal processing pioneered by Shannon \cite{Sha48}.
In network communication settings, the simplest model is a multiple access channel (MAC), where two spatially separated senders aim to transmit individual messages to a single receiver.
Faithful information transmission through a MAC is possible within its capacity region, which was characterized by Ahlswede \cite{Ahlswede71} and Liao \cite{Liao72} in terms of a so-called single-letter formula, i.e., an entropic optimization problem of fixed bounded dimension that is in principle computable.
In quantum information theory, communication tasks can be enhanced dramatically if the communicating parties are given access to quantum resources such as shared entanglement \cite{BW92,BBC+93}.
However, certain tasks such as classical single-sender-single-receiver communication receive no advantage from entanglement assistance \cite{BSST99}.

In this work, we show that MACs behave in a fundamentally different way in the presence of entanglement assistance, in contrast to the single-sender-single-receiver scenario.
Moreover, even unassisted classical MACs exhibit far more complex behavior than previously widely appreciated. 
We demonstrate this by constructing a family of classical MACs with surprisingly rich behavior:
First, we show that entanglement shared between the senders can strictly increase the capacity region of a classical MAC, proving that entanglement can help in a purely classical communication scenario.
Second, we exhibit examples of channels for which an unbounded amount of entanglement is needed to achieve the maximal possible increase of the achievable rate region.
We also show that it is generally undecidable to determine whether the maximal rate pair can be achieved for a MAC with finite-dimensional entanglement strategies.
Finally, we prove in the unassisted communication setting that it is NP-hard to determine which rates can be achieved for a given MAC.
Our findings imply that even for the arguably simplest network information-theoretic setting of a MAC, there is no general solution to the problem of determining its unassisted communication capabilities, highlighting the need for practical approximation algorithms.
At the same time, entanglement assistance can push the achievable information transmission rates of MACs beyond the classical limit, paving the way for harnessing entangled resources in hybrid classical-quantum information networks. 

\section{Results}

\subsection{Entanglement Helps a Classical Multiple Access Channel}\label{sec:entanglement}
We first briefly review classical multiple access channels (MAC), a general example of which is shown in \Cref{fig:macs}.
Our results concern the capacity region $\cC(N)$ of a MAC $N$, consisting of all the rate pairs $(R_1,R_2)$ such that sender $i$ can faithfully transmit information to the receiver at the rate $R_i$ (see \Cref{sec:macs} for a more detailed definition).
Ahlswede \cite{Ahlswede71} and Liao \cite{Liao72} proved that $\cC(N)$ is given by the convex hull of all pairs $(R_1,R_2)$ satisfying
\begin{align}
	R_1 &\leq I(A; Z|B) & R_2 &\leq I(B; Z|A) & R_1 + R_2 &\leq I(AB;Z)
	\label{eq:MAC-capacity-region-main}
	\end{align}
for some product distribution $\pi_A\pi_B$ on $\cA\times \cB$.
Here, $I(U;V|W) = H(UW) + H(VW) - H(W) - H(UVW)$ is the conditional mutual information,  $H(X) = -\sum_i p(x_i) \log p(x_i)$ is the Shannon entropy of a random variable $X\sim p$ with the logarithm taken to base $2$, and $I(U;V) = H(U)+H(V)-H(UV)$ is the mutual information.

The central object in our work is a multiple access channel $N_G$ defined in terms of a non-local game $G=(\cX_1,\cX_2,\cY_1,\cY_2,W)$, where $\cX_1,\cX_2$ and $\cY_1,\cY_2$ are the question and answer sets for Alice and Bob, respectively, and $W\subset \cX_1\times\cX_2\times\cY_1\times\cY_2$ is the winning condition.
For the MAC $N_G$, the input alphabets of the two senders Alice and Bob are the question-answer sets $\cX_1\times\cY_1$ and $\cX_2\times\cY_2$, respectively, and the output alphabet of $N_G$ is $\cX_1\times\cX_2$.
If Alice and Bob win the non-local game $G$, that is, $(x_1,y_1,x_2,y_2)\in W$, the channel is noiseless and outputs the question pair $(x_1,x_2)$ to the receiver.
If they lose the game, $(x_1,y_1,x_2,y_2)\notin W$, the channel outputs a question pair $(\hat{x}_1,\hat{x}_2)$ drawn uniformly at random from $\cX_1\times\cX_2$.
More formally, the MAC $N_G\colon (\cX_1\times\cY_1) \times (\cX_2\times\cY_2)\to \cX_1\times \cX_2$ is defined as
\begin{align}
N_G(\hat{x}_1,\hat{x}_2 | x_1,y_1; x_2,y_2) &\coloneqq \begin{cases}
\delta_{x_1\hat{x}_1}\delta_{x_2 \hat{x}_2} & \text{if }(x_1,x_2,y_1,y_2) \in W,\\
(|\cX_1| |\cX_2|)^{-1} & \text{else}.
\end{cases}
\label{eq:non-local-mac-main}
\end{align}
This channel construction is inspired by previous work by Quek and Shor\cite{QuekShor}, who used a similar construction in terms of the CHSH game\cite{CHSH69} for an interference channel consisting of two senders and two receivers.
It also appeared in unpublished work by N\"{o}tzel and Winter \cite{NW17}, a portion of which has appeared in \cite{Noe19}.

The noise in the MAC $N_G$ defined in \eqref{eq:non-local-mac-main} is determined by the players' ability to win the non-local game $G$.
Clearly, if there exists a perfect strategy for Alice and Bob (i.e., a strategy that wins the game with certainty on any question pair), they can select their questions uniformly at random and transmit information to the receiver at rates $R_i = \log |\cX_i|$, achieving the maximal possible sum rate $R_1 + R_2 = \log |\cX_1| + \log |\cX_2|$.
On the other hand, if they cannot win the game with certainty, then the channel necessarily adds noise to their signals, and consequently the achievable sum rate decreases.
We can make this intuition precise by observing that, setting $A=X_1Y_1$ and $B=X_2Y_2$, the mutual information $I(X_1Y_1X_2Y_2;Z)$ constraining the sum rate $R_1 + R_2$ in \eqref{eq:MAC-capacity-region-main} can be expressed as 
\begin{align}
I(X_1Y_1X_2Y_2;Z) = H(Z) - p_L (\log|\cX_1|+\log|\cX_2|).
\end{align}
Here, $p_L$ denotes the probability of losing the game $G$ given a product distribution $p_Ap_B$ on the questions $x_i$ and a strategy producing the answers $y_i$.
This relation allows us to prove a bound on the capacity region of $N_G$ whenever the non-local game $G$ does not admit a perfect strategy:
\begin{thm}
	If a non-local game $G$ does not admit a perfect strategy (using the available resources), the sum rate $R_1+R_2$ of the MAC $N_G$ defined in \eqref{eq:non-local-mac-main} is strictly bounded away from $\log |\cX_1| + \log|\cX_2|$.
	\label{thm:separation}
\end{thm}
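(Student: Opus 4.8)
The plan is to show that the sum‑rate constraint $R_1+R_2 \le I(X_1Y_1X_2Y_2;Z)$ in \eqref{eq:MAC-capacity-region-main} is itself bounded away from $\log|\cX_1|+\log|\cX_2|$, uniformly over all product input distributions and all available strategies, which by the Ahlswede–Liao characterization immediately bounds the sum rate of every point in $\cC(N_G)$. Using the identity $I(X_1Y_1X_2Y_2;Z) = H(Z) - p_L(\log|\cX_1|+\log|\cX_2|)$ already derived in the excerpt, and the trivial bound $H(Z) \le \log|\cX_1|+\log|\cX_2|$ (since $Z$ takes values in $\cX_1\times\cX_2$), we get
\begin{align}
I(X_1Y_1X_2Y_2;Z) \le (1-p_L)\bigl(\log|\cX_1|+\log|\cX_2|\bigr).
\label{eq:plan-bound}
\end{align}
So it suffices to show that $p_L$ is bounded below by a strictly positive constant that depends only on $G$ (and the resource class), not on the chosen input distribution.

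\medskip

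The core step is therefore to argue: \emph{if $G$ admits no perfect strategy using the available resources, then $\inf p_L > 0$}, where the infimum is over product question distributions $p_A p_B$ on $\cX_1\times\cX_2$ together with a strategy (classical, or shared‑entanglement, depending on the setting) producing the answers. First I would observe that a strategy for the MAC senders is exactly a strategy for the game $G$: Alice's behavior on input‑symbol $(x_1,y_1)$ reading off $x_1$ and outputting $y_1$ (possibly using her share of an entangled state, which she may measure depending on $x_1$), and likewise for Bob, is precisely a general strategy in the relevant class. The losing probability $p_L = \sum_{x_1,x_2} p_A(x_1)p_B(x_2)\,\Pr[\text{lose}\mid x_1,x_2]$ is a convex combination, over question pairs, of the per‑question losing probabilities. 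If the optimal winning probability of $G$ over the resource class is $\omega^* < 1$ when questions are drawn from the \emph{uniform} distribution, that alone does not immediately give a uniform lower bound on $p_L$ for \emph{arbitrary} product distributions $p_Ap_B$, because an adversarial $p_Ap_B$ could concentrate on the question pairs the players handle well. The fix: since $G$ has no perfect strategy in the class, for \emph{every} fixed strategy there is at least one question pair $(x_1^\dagger,x_2^\dagger)$ on which the players lose with positive probability; but we need this quantitatively and uniformly. The clean way is to restrict attention to \emph{full‑support} product distributions — if $p_Ap_B$ does not have full support, the senders simply aren't using some input symbols, and one can pass to the sub‑game on the used questions, which still has no perfect strategy, reducing to the full‑support case — and then use a compactness argument.

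\medskip

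Concretely, I would define, for a fixed resource dimension $d$ (in the entangled case; $d=1$ classically), the quantity $\gamma_d \coloneqq \inf_{p_A,p_B,\,\sigma}\, p_L$, where the infimum is over product distributions with support on all of $\cX_1\times\cX_2$ and over strategies using entanglement of local dimension at most $d$. The set of such strategies is compact (measurements and states on a fixed finite‑dimensional space form a compact set), and for a \emph{fixed} full‑support $p_Ap_B$ the map (strategy) $\mapsto p_L$ is continuous and, by the no‑perfect‑strategy hypothesis, strictly positive everywhere on this compact set, hence bounded below by a positive constant. To make this uniform in $p_Ap_B$ as well, note $p_L \ge (\min_{x_1} p_A(x_1))(\min_{x_2} p_B(x_2)) \cdot \max_{(x_1,x_2)} \Pr[\text{lose}\mid x_1,x_2]$; but the minima can vanish, so instead I argue directly: it suffices to lower‑bound $\max_{(x_1,x_2)}\Pr[\text{lose}\mid x_1,x_2]$ over strategies (a compact minimization of a continuous, everywhere‑positive function, hence positive — call it $\eta_d > 0$), and then handle the distribution. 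Since any product distribution assigns probability at least $1/(|\cX_1||\cX_2|)$ to \emph{some} question pair (as there are only $|\cX_1||\cX_2|$ of them and they sum to $1$)—wait, that pair need not be a losing one; the correct statement is that the losing pair $(x_1^\dagger,x_2^\dagger)$ for the given strategy need not carry much mass. The robust resolution is a two‑stage compactness: consider the product space of (full‑support distributions, strategies); on this space $p_L$ is continuous and everywhere positive; the space of strategies is compact, but full‑support distributions are not a closed set. However, $p_L$ extends continuously to \emph{all} product distributions, and on the compact set of all (distribution, strategy) pairs the function $p_L$ might hit zero only at a distribution–strategy pair where the distribution is supported on a sub‑rectangle on which the strategy is perfect. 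By the hypothesis, no strategy is perfect on the full game, but it could be perfect on a proper sub‑rectangle. This is the real obstacle, and I expect it is handled by induction on $|\cX_1|+|\cX_2|$: the bound for $N_G$ follows from the (strictly positive) bounds for $N_{G'}$ over all sub‑games $G'$ of $G$ that still lack a perfect strategy, together with the base observation that if \emph{every} proper sub‑rectangle admits a perfect strategy but the full game does not, then restricting to any maximal "good" sub‑rectangle and adding one more question forces positive loss there. In the classical setting this inductive bookkeeping is routine; in the entangled setting with fixed $d$ it goes through identically because the relevant strategy sets remain compact. The only setting where this argument does \emph{not} yield a uniform constant is unbounded‑dimensional entanglement — but there "no perfect strategy" is exactly the statement that the entangled value is $<1$, and whether that infimum is attained/strictly positive is precisely the subtle point flagged later in the paper (undecidability), so for the theorem as stated I would read "using the available resources" as "a fixed resource class for which the relevant strategy set is compact", under which \eqref{eq:plan-bound} together with $p_L \ge \eta > 0$ gives $R_1+R_2 \le (1-\eta)(\log|\cX_1|+\log|\cX_2|)$, completing the proof.
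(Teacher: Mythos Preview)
Your argument has a genuine gap: the claim that $p_L$ is uniformly bounded below by some $\eta>0$ over all product question distributions is false. Whenever $\pi_{X_1}\pi_{X_2}$ is supported on a proper sub-rectangle admitting a perfect strategy (and such sub-rectangles typically exist---any single question pair in the magic square game, for instance), one has $p_L=0$. You notice this, but your remedies do not work: the assertion that the sub-game on the used questions ``still has no perfect strategy'' is wrong in general, and the proposed induction on $|\cX_1|+|\cX_2|$ stalls because the inductive hypothesis is unavailable precisely for sub-games that \emph{do} admit a perfect strategy, which is the only case that matters. The simple repair is to apply compactness to $I(X_1Y_1X_2Y_2;Z)$ itself rather than to $p_L$: at any maximizer $(\pi^*,\sigma^*)$ with $I=\log|\cX_1|+\log|\cX_2|$ one needs simultaneously $p_L=0$ and $H(Z)$ maximal; but $p_L=0$ forces $Z=(X_1,X_2)$ deterministically, so $H(Z)$ maximal means $(X_1,X_2)$ is uniform on the \emph{full} question set, whence $\sigma^*$ is perfect on all of $\cX_1\times\cX_2$---contradiction.

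The paper's proof is genuinely different, because the applications (the NP-hardness result and the dimension-dependent bounds) require an \emph{explicit} gap, which bare compactness cannot supply. Rather than trying to lower-bound $p_L$, the paper runs a quantitative dichotomy: either $H(Z)<\log|\cX_1|+\log|\cX_2|-\delta$ (done), or $H(Z)$ is within $\delta$ of maximal, in which case an entropy calculation gives $D(\pi_{X_1}\pi_{X_2}\|\pi_U)\le\gamma(\delta,p_L)$, and then the data-processing inequality applied to the map ``play the strategy and record win/lose'' yields $\gamma\ge\delta(p_L\|1-\omega_U(G))$, forcing $p_L$ close to $1-\omega_U(G)>0$. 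Either branch produces an explicit upper bound on the sum rate as a function of $\omega_U(G)$.
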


Consider now a non-local game $G$ which cannot be won with certainty using any classical strategy, and assume there exists a perfect quantum strategy.
If we allow shared entanglement between the two senders of the MAC, then \Cref{thm:separation} provides a provable separation between the capacity region of the unassisted MAC $N_G$ and the entanglement-assisted achievable rate region.
A well-known example of such a non-local game is the magic square game $G_{\MS}$ \cite{Mer90,Peres1990,Ara02b,BBT05}, in which Alice and Bob have to fill a given row and column of a $3\times 3$-square with a bit string such that the parity of Alice's row is even, the parity of Bob's column is odd, and the assignments are consistent in the overlapping cell.
It is well-known that any classical strategy has winning probability at most $\frac{8}{9}$ \cite{BBT05}, as illustrated in the left panel in \Cref{fig:magic-square-game}.
For the MAC $\Nms$ defined in terms of the magic square game, we can use \Cref{thm:separation} to obtain an upper bound on the achievable sum rate of $3.13694$, bounding it away from the maximal value of $2\log 3 \approx 3.17$.

On the other hand, there is a perfect quantum strategy in which Alice and Bob make measurements on two maximally entangled states shared between them \cite{Mer90,Peres1990,BBT05}.
This perfect quantum strategy, depicted in the right panel of \Cref{fig:magic-square-game}, allows Alice and Bob to send individual messages to the receiver at the rates $R_1 = R_2 = \log 3$, yielding the maximal sum rate $R_1+R_2=2\log 3$ which is not achievable by any classical coding strategy by \Cref{thm:separation}.
Therefore, the unassisted capacity region of $\Nms$ is separated from the point $(\log 3,\log 3)$, while the entanglement-assisted achievable rate region includes this point.
This separation between classical strategies and entanglement-assisted strategies occurs for any non-local game $G$ with no perfect classical strategies and perfect quantum strategies, a so-called ``pseudo-telepathy game'' \cite{BBT05}.
Each game in this class yields a separation for the corresponding MAC between the unassisted capacity region and the entanglement-assisted region via \Cref{thm:separation}.

\subsection{How Much Entanglement Do You Need?}\label{sec:unbounded}

Our main result, \Cref{thm:separation}, can also be applied to separate achievable rate regions for coding strategies using different amounts of entanglement.
To illustrate this, we consider the class of linear system games $G_{\LS}$ \cite{cleve2014characterization}, which are defined in terms of an $m\times n$ linear system $Ax=b$ of equations over $\mathbb{F}_2$.
Slofstra and Vidick showed that there is a particular instance $G_{\SV}$ of a linear system game for which a perfect winning strategy is necessarily quantum and furthermore requires an unbounded amount of entanglement \cite{SV18}.
More precisely, they gave upper and lower bounds on the local dimension $d$ of the quantum systems associated with Alice and Bob in the quantum strategy in terms of the losing probability $p_L$.
Consider now the MAC $N_{G_{\SV}}$ defined according to \eqref{eq:non-local-mac-main} in terms of the linear system game $G_{\SV}$.
Limiting Alice and Bob to entanglement assistance of local dimension at most $d$, their probability of losing the linear system game is strictly positive \cite{SV18}.
Consequently, we can invoke \Cref{thm:separation} to conclude that the $d$-dimensional entanglement-assisted achievable rate region of $N_{G_{\SV}}$ is bounded away from the rate pair $(\log m, \log n)$ achieving the maximal sum rate $\log m + \log n$.
On the other hand, it is straightforward to define a $d$-dimensional entanglement-assisted coding strategy for Alice and Bob based on the quantum strategy derived by Slofstra and Vidick \cite{SV18} whose winning probability converges to unity as $d$ grows.
Hence, as Alice and Bob have access to larger and larger entangled states, they approximate the perfect sum rate $\log m + \log n$ arbitrarily well.

Our results show that linear system games give rise to a family of MACs whose $d$-entanglement-assisted achievable rate regions approach the rate pair $(\log m,\log n)$ in the limit $d\to\infty$, yet they are strictly bounded away from it for any fixed finite $d$.
Moreover, considering all finite-dimensional quantum strategies for a general linear system game $G_{\LS}$, Slofstra showed that it is undecidable to determine whether there is a perfect quantum strategy among them \cite{Slofstra2019}.
By the arguments above, this directly translates to the following result: For the MAC $N_{G_{\LS}}$ defined in terms of a linear system game $G_{\LS}$, it is undecidable to determine whether the entanglement-assisted achievable rate region includes the rate pair $(\log m, \log n)$.

\subsection{Complexity of the Capacity Region of a Classical MAC}\label{sec:np-hard}

Finally, we turn our focus to the unassisted coding scenario for a discrete MAC.
In information-theoretic terms, this scenario seems well understood as the capacity region $\cC(N)$ of a MAC $N$ can be expressed in terms of a computable single-letter formula \cite{Ahlswede71,Liao72}.
However, the single-letter nature of the capacity region formula by itself does not guarantee an efficient method of computing $\cC$ in, say, runtime polynomial in $\max\lbrace |\cA|,|\cB|,|\cZ| \rbrace$, the maximal size of the input and output alphabets of $N$.
Using our construction of a MAC in terms of non-local games, we prove that it is NP-hard to decide whether a given point $(R_1,R_2)$ belongs to the capacity region of a MAC to within an additive error inverse-cubic in $n$, where $n$ is the size of the output alphabet.
Our result is based on a non-local game version $G_{\Ha}$ of 3SAT introduced by Håstad \cite{H01}, consisting of $m=\mathcal{O}(n)$ clauses containing exactly three out of $n$ literals.
For this non-local game, it follows from the probabilistically checkable proofs (PCP) theorem \cite{F98, SS06} that it is NP-hard to decide if there is a perfect winning strategy for $G_{\Ha}$ or if the maximal winning probability is bounded from above by $1-(1-c)/n$ for some constant $c<1$ \cite{H01}.
Consider now $N_{G_{\Ha}}$, the MAC defined in \eqref{eq:non-local-mac-main} in terms of the game $G_{\Ha}$. 
Clearly, if Alice and Bob have a perfect winning strategy for $G_{\Ha}$, they can each code at the rates $R_1 = \log m$ and $R_2 = \log n$ by choosing a uniform distribution over their respective question alphabets.
This leads to the maximal sum rate $R_1+R_2=\log m + \log n$.
On the other hand, if the maximal winning probability is bounded from above by $\omega^*=1-(1-c)/n$, then \Cref{thm:separation} can be used to show that the sum rate $R_1+R_2$ is bounded from above by $\log m + \log n - (1-\omega^*)^3$.
In this case, the capacity region $\cC(N_{G_{\Ha}})$ is bounded away from the rate pair $(\log m,\log n)$.
Altogether, this shows that it is NP-hard to decide if an arbitrary rate tuple $(R_{1},R_{2})$ belongs to $\cC(N_{G_{\Ha}})$ to precision inverse-cubic in $n$.

\section{Discussion}

In this work we show that the capacity region of a multiple access channel displays complex behavior, both in a purely classical setting and when the senders have access to shared entangled quantum states.
In particular, we prove that entanglement assistance can boost the achievable rates in a setting where two senders try to convey classical information through a common classical communication channel to a single receiver.
Such an increase in capacity is impossible in the point-to-point scenario involving a single sender and a receiver.
We also show that for a certain family of MACs the two senders need to share an unbounded amount of entanglement in order to achieve the ideal communication rate pair.
When restricted to finite-dimensional entangled strategies, it is undecidable for this particular channel family whether the ideal rate can be achieved.
Finally, we show that even in the unassisted scenario, it is in fact NP-hard to decide whether the ideal rate pair belongs to the capacity region of a MAC.
This result is a strong counterpoint to the widely held belief that the availability of a computable single-letter formula for the capacity region essentially solves the MAC problem.
The central tool in the proofs of all results above is the construction of a MAC in terms of a non-local game in such a way that the noise level of the channel is determined by the senders' ability to win the game.

Our work opens up a number of interesting topics for future work.
Numerical investigations for the magic square channel suggest that the true separation between classical and quantum coding strategies for the MACs considered in this work is considerably larger than the separation guaranteed by \Cref{thm:separation}, suggesting that our bound on the sum rate could be further tightened.
For our results above we considered a specific achievable rate region that arises naturally when the two senders measure identical copies of a single entangled state.
In general, the senders might have access to multipartite entangled states and implement parallel encoding strategies, which leads to the notion of an entanglement-assisted capacity region.
We expect this region to be given by the regularization of the achievable region considered in this paper.
For the MACs defined via our construction, this question seems to be related to parallel repetition theorems for non-local games played with quantum strategies.
Furthermore, in this work we only considered entanglement shared between the two senders, and the communication setting could be generalized to one where entanglement is shared between both the senders and the receiver.
Finally, our NP-hardness result for the unassisted capacity region of a MAC underlines the need for tight efficiently computable outer bounds on the unassisted capacity region. 
Such bounds could for example be obtained from convex relaxations of the rank-1 optimization problem describing the MAC capacity region \cite{CPFV10}.

\paragraph*{Data availability}
No data sets were generated during this study.

\paragraph*{Code availability}
MATLAB and Mathematica code files used to obtain the numerical bounds in \Cref{sec:magic-square-game} are available from the corresponding author upon request.

\printbibliography[heading=bibintoc]

\paragraph*{Acknowledgments}
We thank Mark M.~Wilde and Andreas Winter for helpful comments and feedback.
This work was partially supported by NSF Grant No.~PHY 1734006 and NSF CAREER award CCF 1652560.

\paragraph*{Author Contributions}
FL, MA, JL, and GS each contributed extensively to the paper.

\paragraph*{Author information}
The authors declare no competing interests.
Correspondence should be addressed to \texttt{felix.leditzky@jila.colorado.edu}.

\begin{figure}[h]
	\centering
	\begin{subfigure}[b]{0.3\textwidth}
		\includegraphics[width=\textwidth]{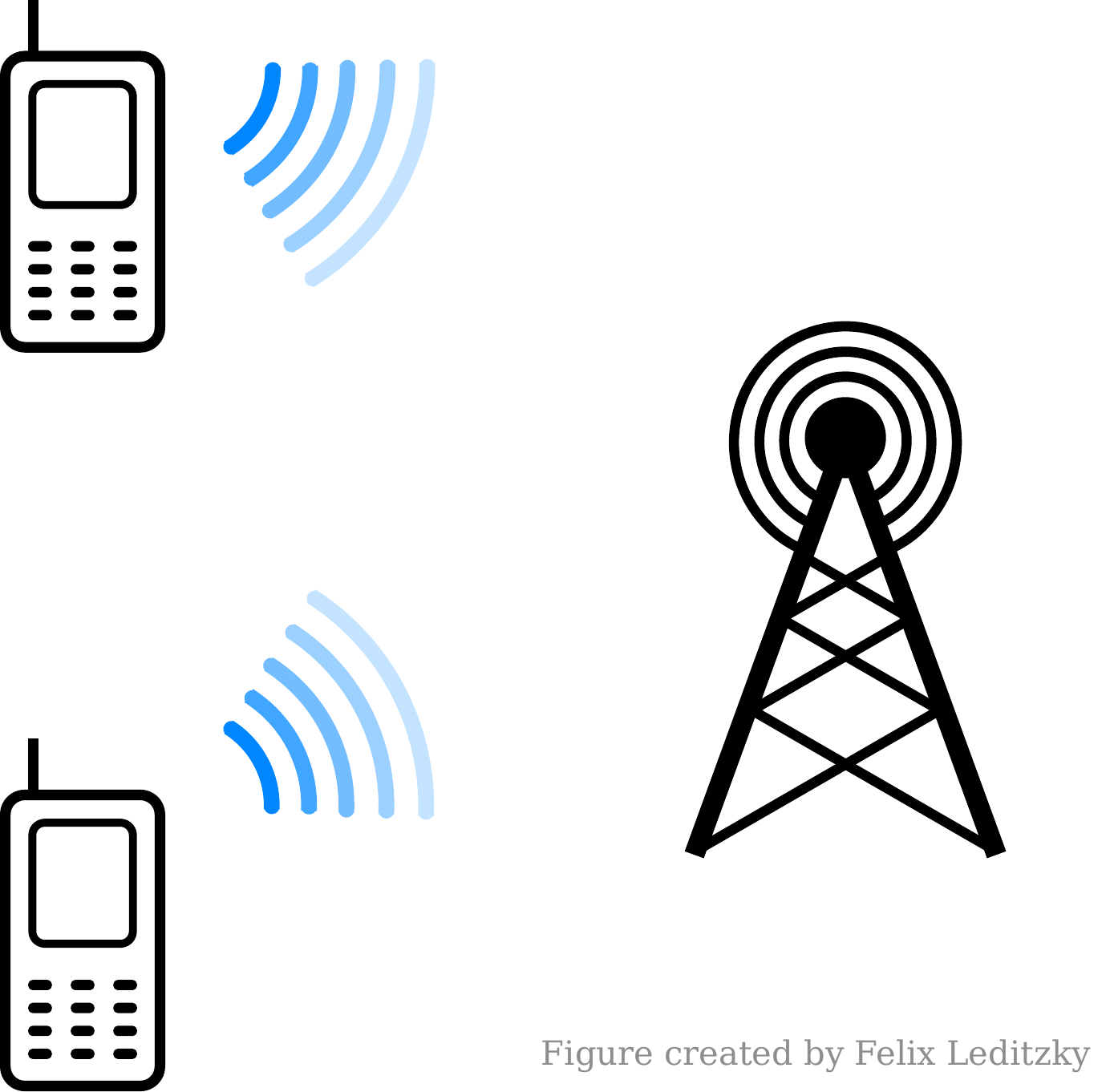}
		\caption{Realistic MAC model.}
		\label{fig:MAC-realistic}
	\end{subfigure}
	\hfill
	\begin{subfigure}[b]{0.3\textwidth}
		\centering
		\begin{tikzpicture}
		\draw[very thick] (0,2) rectangle (2,-1) node[pos=.5] {$N(z|a,b)$};
		\draw[thick] (2,.5) -- (3,.5) node[pos=.5, above right] {$Z$};
		\draw[thick] (-1,1.5) -- (0,1.5) node[pos=0, above right] {$A$};
		\draw[thick] (-1,-.5) -- (0,-.5) node[pos=0, below right] {$B$};
		\end{tikzpicture}
		\caption{Mathematical MAC model.}
		\label{fig:MAC-math}
	\end{subfigure}
	\hfill
	\begin{subfigure}[b]{0.3\textwidth}
		\centering
		\begin{tikzpicture}
		\draw[->,>=latex,very thick] (0,0) -- +(0,3) node[pos=1,above] {$R_1$};
		\draw[->,>=latex,very thick] (0,0) -- +(3,0) node[pos=1,right] {$R_2$};
		\draw[dashed,thick] (0,2) -- (1.2,2) -- (2,1.2) -- (2,0);
		\draw[thick] plot [smooth] coordinates {(0,2.25) (1.5,2) (2,1.5) (2.25,0)};
		\end{tikzpicture}
		\caption{Capacity region of a MAC.}
		\label{fig:cap-region}
	\end{subfigure}
	\caption{
		Multiple access channels.\\
		\subref{fig:MAC-realistic} Realistic scenario of a multiple access channel (MAC), in which two cell phones send data to a cell tower. 
		\subref{fig:MAC-math} Mathematical model of a MAC $N$, characterized by finite input alphabets $\cA$ and $\cB$, an output alphabet $\cZ$, and a conditional probability distribution $N(z|a,b)$ for $a\in \cA, b\in\cB, z\in\cZ$. 
		The random variables corresponding to the senders and the receiver are denoted by $A$, $B$, and $Z$, respectively.
		\subref{fig:cap-region} A typical capacity region of a MAC (solid line), together with an achievable pentagonal region for a fixed input distribution (dashed lines).
	}
	\label{fig:macs}
\end{figure}

\begin{figure}[h]
	\centering
	\begin{subfigure}[b]{0.4\textwidth}
		\centering
		\includegraphics{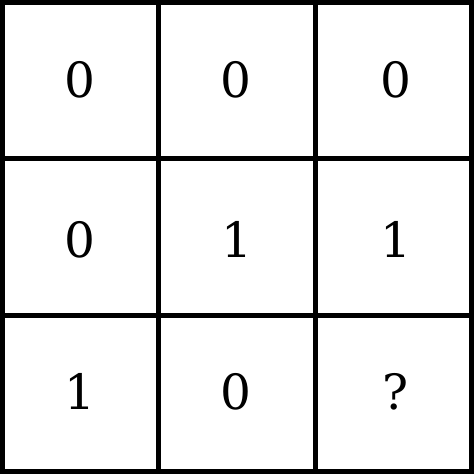}
		\caption{Optimal classical strategy for the magic square game.}
		\label{fig:ms-classical}
	\end{subfigure}
	\hfill
	\begin{subfigure}[b]{0.4\textwidth}
		\centering
		\includegraphics{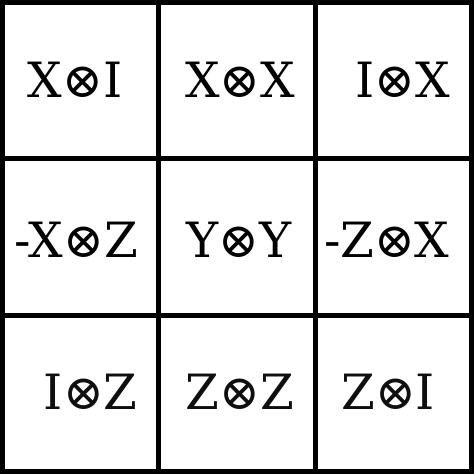}
		\caption{Perfect quantum strategy for the magic square game.}
		\label{fig:ms-quantum}
	\end{subfigure}
	\caption{Classical and quantum strategies for the magic square game.\\
		\subref{fig:ms-classical} An optimal classical strategy for the magic square game that allows Alice and Bob to win the game for 8 of the 9 possible questions. Filling the bottom right square consistently with the parity constraints for the rows (even) and columns (odd) is impossible. 
		\subref{fig:ms-quantum} A perfect quantum strategy defined by measuring the observables in the cells on two maximally entangled states. Note that the observables along each row and column commute pairwise.}
	\label{fig:magic-square-game}
\end{figure}

\appendix 

\section{Preliminaries}\label{sec:prelim}
\subsection{Non-local Games}\label{sec:non-local}

A two-player non-local game is a tuple $G=(\cX_1,\cX_2,\cY_1,\cY_2,P,W)$, where $\cX_i$ and $\cY_i$ are the question and answer sets for player $i=1,2$, respectively. 
The set $P\subset \cX_1 \times \cX_2$ is called the \textit{promise} of the game, the set of ``allowed'' questions.
The set $W\subset P \times \cY_1 \times \cY_2$ is the \textit{winning condition}, i.e., upon receiving the questions $(x_1,x_2)\in P$ and answering with $y_i\in \cY_i$, the players win the game if $(x_1,x_2,y_1,y_2)\in W$, and lose otherwise.

Every game $G=(\cX_1,\cX_2,\cY_1,\cY_2,P,W)$ with promise $P\subsetneq \cX_1 \times \cX_2$ can be turned into a \textit{promise-free} game $G'=(\cX_1,\cX_2,\cY_1,\cY_2,P',W')$ by declaring $P'= \cX_1 \times \cX_2$ and $W' = W \cup ( P^c \times \cY_1 \times \cY_2)$, i.e., the players win automatically if they receive a question pair outside the promise $P$.
We write $G=(\cX_1,\cX_2,\cY_1,\cY_2,W)$ for a promise-free game.

While the two players Alice and Bob can agree on a strategy beforehand, they are not allowed to communicate during the game.
A \textit{deterministic strategy} for Alice and Bob is a pair of deterministic functions $f_i\colon \cX_i \to \cY_i$ for $i=1,2$. 
A \textit{probabilistic strategy} for Alice and Bob is a probabilistic mixture of deterministic strategies.
For a given probability distribution $\pi\colon P\to [0,1]$ on the promised question set $P \subset \cX_1 \times \cX_2$, we define $\omega(G,\pi)$ as the maximal winning probability using probabilistic strategies.
If the given probability distribution on $P$ is the uniform distribution $\pi_U$, we use the shorthand $\omega_U(G) \equiv \omega(G,\pi_U)$. 
Note that the maximal winning probability is always achieved on an extremal point, i.e., a deterministic strategy.
A strategy achieving $\omega(G,\pi)=1$ is called \textit{perfect}.
	
\subsection{Capacity Region of Multiple Access Channels}\label{sec:macs}

\begin{figure}[ht]
	\centering
	\begin{tikzpicture}
	\draw (0,2) rectangle (2,-1) node[pos=.5] {$N(z|a,b)$};
	\draw (2,.5) -- (3,.5) node[pos=.5, above right] {$Z$};
	\draw (-1,1.5) -- (0,1.5) node[pos=0, above right] {$A$};
	\draw (-1,-.5) -- (0,-.5) node[pos=0, below right] {$B$};
	\end{tikzpicture}
	\caption{Mathematical model of a multiple access channel.\\
		The two senders and the single receiver of a multiple access channel $N(z|a,b)$ are associated with random variable $A$, $B$ and $Z$, respectively.}
	\label{fig:MAC}
\end{figure}

In this paper we only consider discrete memoryless multiple access channels without feedback, to which we refer simply as multiple access channels (MAC).
A two-sender MAC is a tuple $(\cA,\cB,\cZ,N(z|a,b))$, where $\cA$ and $\cB$ are the input alphabets for sender 1 and 2, respectively, $\cZ$ is the output alphabet for the single receiver, and $N(z|a,b)$ is a probability distribution for all pairs $(a,b)\in\cA\times\cB$ (see \Cref{fig:MAC}).

The following discussion is taken from \cite{EGK11}.
An $(R_1^{(n)},R_2^{(n)},n)$-code is a tuple $(\cM_1,\cM_2,a^n,b^n,\hat{z}^n)$, where:
\begin{itemize}
\item $\cM_1$ and $\cM_2$ are message sets with $|\cM_i|=2^{nR_i^{(n)}}$ for $i=1,2$;
\item $a^n\colon \cM_1 \to \cA^n$ and $b^n\colon \cM_2 \to \cB^n$ are encoding functions;
\item $\hat{z}^n\colon \cZ^n\to \cM_1\times \cM_2\cup\lbrace e\rbrace$ is a decoding function, with $e$ an arbitrary error message.
\end{itemize}
Without loss of generality we assume a uniform distribution over the messages $(M_1,M_2)$; in particular, the codewords $a^n(M_1)$ and $b^n(M_2)$ are independent.
The average probability of error is defined as
\begin{align}
P_e^{(n)} \coloneqq \prob\lbrace (\hat{M}_1,\hat{M}_2) \neq (M_1,M_2) \rbrace,
\end{align}
where $(\hat{M}_1,\hat{M}_2)$ is a random variable with probability mass function (pmf) $\hat{z}^n(N^{\times n}(z^n|a^n(M_1),b^n(M_2)))$.
A rate pair $(R_1,R_2)$ is said to be achievable if there exists a sequence of codes $\lbrace (R_1^{(n)},R_2^{(n)},n)\rbrace_{n\in\mathbb{N}}$ such that $\liminf_{n\to\infty} R_i^{(n)} = R_i$ and $\lim_{n\to\infty} P_e^{(n)} = 0$.
The capacity region $\cC(N)$ of the MAC $(\cA,\cB,\cZ,N(z|a,b))$ is the closure of the set of all achievable rate pairs $(R_1,R_2)$.
We also consider the sum-capacity $\Csum(N)$ defined as $\Csum(N)\coloneqq \sup\lbrace R_1 + R_2\colon (R_1,R_2)\in \cC(N)\rbrace$.

The capacity region $\cC(N)$ of a two-sender MAC $(\cA,\cB,\cZ,N(z|a,b))$ has a single-letter characterization \cite{Ahlswede71,Liao72}.
Let $(A,B)$ be a pair of discrete random variables jointly distributed according to the pmf $p_{A} p_{B}$, and let $Z$ be the channel output random variable with conditional pmf $N(z|a,b)$.
Define $\cR(A,B)$ as the set of rate pairs $(R_1,R_2)$ satisfying
\begin{align}
\begin{aligned}
R_1 &\leq I(A; Z|B)\\
R_2 &\leq I(B; Z|A)\\
R_1 + R_2 &\leq I(A,B;Z).
\end{aligned}
\label{eq:cap-region-constraints}
\end{align}
Then the capacity region $\cC(N)$ of the MAC $(\cA,\cB,\cZ,N(z|a,b))$ is the convex hull of the union of the regions $\cR(A,B)$ over all product distributions $p_{A} p_{B}$:
\begin{align}
\cC(N) = \operatorname{conv}\left( \bigcup\lbrace \cR(A,B)\colon (A,B)\sim p_{A}p_{B}\rbrace \right)
\label{eq:MAC-capacity-region}
\end{align}

\subsection{MACs and Entanglement Assistance}\label{sec:EA-MAC}

\begin{figure}
	\centering
	\begin{tikzpicture}
	\draw (0,-.25) -- (1,-.25) node[pos=0,above right] {$A_1$};
	\draw (0,-2.75) -- (1,-2.75) node[pos=0,above right] {$B_1$};
	\draw (1,-.25) -- (1.5,.25) -- (5.5,.25);
	\draw (1,-.25) -- (1.5,-.75) -- (2,-.75);
	\draw (1,-2.75) -- (1.5,-2.25) -- (2,-2.25);
	\draw (1,-2.75) -- (1.5,-3.25) -- (5.5,-3.25);
	\draw (2,-.5) rectangle (4.5,-2.5) node[pos=.5] {$P(a_2,b_2|a_1,b_1)$};
	\draw (4.5,-.75) -- (5.5,-.75) node[pos=0,above right] {$A_2$};
	\draw (4.5,-2.25) -- (5.5,-2.25) node[pos=0,above right] {$B_2$};
	\draw (5.5,-1) rectangle (8,.5) node[pos=0.5] {$f_1(a|a_1,a_2)$};
	\draw (5.5,-2) rectangle (8,-3.5) node[pos=0.5] {$f_2(b|b_1,b_2)$};
	\draw (8,-.25) -- (9.5,-.25) node[pos=0,above right] {$A$};
	\draw (8,-2.75) -- (9.5,-2.75) node[pos=0,above right] {$B$};
	\draw (9.5,0) rectangle (11.5,-3) node[pos=0.5] {$N(z|a,b)$};
	\draw (11.5,-1.5) -- (12.5,-1.5) node[pos=0.5,above right] {$Z$};
	\draw[dashed,thick,color=blue] (0.75,1) rectangle (8.75,-4);
	\node[color=blue] at (4.75,-4.5) {$E(a,b|a_1,b_1)$};
\end{tikzpicture}
	\caption{Entanglement-assisted coding scenario for a multiple access channel.\\
		The encoding $E$ (blue rectangle) is composed of the following: a correlation $P$ obtained from each sender measuring her system of a shared entangled state with POVMs selected according to the inputs $A_1$ and $B_1$ (see \eqref{eq:entangled-correlation}), and a post-processing of the outcomes $a_2$ and $b_2$ together with $a_1,b_1$ using functions $f_i$ to obtain the inputs $a$ and $b$ to the MAC $N$. If $A_1$ and $B_1$ are independent random variables, then the total channel $N\circ E$ can be interpreted as a MAC with input $A_1$ and $B_1$ and output $Z$, whose capacity region $\cC(N\circ E)$ (as defined in \eqref{eq:MAC-capacity-region}) is equal to the $d$-entanglement-assisted achievable rate region $\CeaO[d] (N)$ defined in \eqref{eq:EA-MAC-capacity-region}.}
	\label{fig:EA-MAC}
\end{figure}

In this paper we consider coding strategies for classical MACs assisted by entanglement shared between the two senders.\footnote{We briefly discuss entanglement assistance where each sender shares entanglement with the receiver at the end of \Cref{sec:magic-square-mac}.}
To formalize this setting, let $(\cA,\cB,\cZ,N(z|a,b))$ be a MAC as defined in \Cref{sec:macs}, and let the two senders $A$ and $B$ share an entangled state $|\psi\rangle_{S_AS_B}\in \mathbb{C}^d\ox \mathbb{C}^d$, where the $d$-dimensional quantum systems $S_A$ and $S_B$ with $|S_A|=|S_B|=d$ are with senders $A$ and $B$, respectively.
We then consider the following coding scenario.

Let $A_1$ and $B_1$ be random variables taking values in finite alphabets $\cA_1$ and $\cB_1$ for sender $A$ and $B$, respectively.
Depending on the value $a_1\in\cA_1$ of $A_1$, the first sender selects a positive operator-valued measure (POVM) $L^{(a_1)} = \lbrace L^{(a_1)}_{a_2}\rbrace_{a_2\in\cA_2}$ with 
\begin{align}
L^{(a_1)}_{a_2} &\geq 0 \quad \text{for all $a_1\in\cA_1$, $a_2\in\cA_2$ and}\\
\sum_{a_2\in\cA_2} L^{(a_1)}_{a_2} &= I_d \quad\text{for all $a_1\in\cA_1$.}
\end{align}
Here, $\cA_2$ is some finite alphabet, and $I_d$ denotes the identity operator on $\mathbb{C}^d$.
Likewise, for $b_1\in\cA_1$ the second sender selects a POVM $M^{(b_1)} = \lbrace M^{(b_1)}_{b_2}\rbrace_{b_2\in\cB_2}$ for some finite alphabet $\cB_2$ satisfying $M^{(b_1)}_{b_2}\geq 0$ for all $b_1\in\cB_1$, $b_2\in\cB_2$ and $\sum_{b_2\in\cB_2} M^{(b_1)}_{b_2} = I_d$ for all $b_1\in\cB_1$. 
Upon drawing $(a_1,b_1)$, the senders measure their respective half of the entangled state $|\psi\rangle_{S_AS_B}$ using the measurements $L^{(a_1)}$ and $M^{(b_1)}$, producing a correlation
\begin{align}
P(a_2,b_2|a_1,b_1) = \langle \psi| L^{(a_1)}_{a_2}\otimes M^{(b_1)}_{b_2} | \psi\rangle.
\label{eq:entangled-correlation}
\end{align}
Finally, the senders can each post-process their measurement outcomes $a_2$ and $b_2$ together with their inputs $a_1$ and $b_1$ to produce inputs $a$ and $b$ to the MAC $N$, which we summarize in a function $f_1(a|a_1,a_2) f_2(b|b_1,b_2)$.\footnote{Note that in principle this post-processing can be made part of the measurements with a potential increase of the local dimension $d$. However, we choose to keep them separate in order to link the local dimension $d$ to the non-local games considered in \Cref{sec:linear-system-games} in a clean way.}
In total, we have the classical channel 
\begin{align}
E(a,b|a_1,b_1) = f_1(a|a_1,a_2) f_2(b|b_1, b_2) P(a_2,b_2|a_1,b_1),
\label{eq:EA-encoding}
\end{align}
where the correlation $P$ is obtained from measuring the shared entangled state $|\psi\rangle_{S_AS_B}$ through \eqref{eq:entangled-correlation}.
The setup is depicted in \Cref{fig:EA-MAC}.

If we require the senders to draw $a_1$ and $b_1$ independently from a product distribution $p_{A_1}(a_1)p_{B_1}(b_1)$, then the channel $N\circ E$ with $E$ as defined in \eqref{eq:EA-encoding} can again be interpreted as a MAC with input alphabet $\cA_1\times\cB_1$ and output alphabet $\cZ$.
This prompts us to define the \textit{$d$-entanglement-assisted achievable rate region} of a classical MAC $N$ as
\begin{align}
\CeaO[d] (N) \coloneqq \bigcup_E \left\lbrace \cC(N\circ E)\right\rbrace,
\label{eq:EA-MAC-capacity-region}
\end{align}
where $\cC(\cdot)$ is the capacity region of an ordinary MAC defined in \eqref{eq:MAC-capacity-region}, and the union is over all classical channels $E$ as in \eqref{eq:EA-encoding} defined in terms of the following data:
\begin{itemize}
	\item an entangled state $|\psi\rangle_{S_AS_B}\in\mathbb{C}^d\ox \mathbb{C}^d$;
	\item arbitrary finite alphabets $\cA_1,\cB_1$ and $\cA_2,\cB_2$;
	\item POVMs $L^{(a_1)} = \left\lbrace L^{(a_1)}_{a_2}\right\rbrace_{a_2\in\cA_2}$ for $a_1\in\cA_1$ and $M^{(b_1)} = \left\lbrace M^{(b_1)}_{b_2}\right\rbrace_{b_2\in\cB_2}$ for $b_1\in\cB_1$, defined on $\mathbb{C}^d$;
	\item post-processings $f_1\colon \cA_1\times \cA_2 \ni (a_1,a_2)\mapsto a \in \cA$ and $f_2\colon \cB_1\times \cB_2 \ni (b_1,b_2)\mapsto b \in \cB$
\end{itemize} 
We also define the corresponding \textit{achievable sum rate}
\begin{align}
\CsumEA[d](N)\coloneqq \sup\lbrace R_1+R_2\colon (R_1,R_2)\in \CeaO[d](N)\rbrace.
\end{align}

The coding theorem \eqref{eq:MAC-capacity-region} for unassisted MACs implies that the region $\CeaO[d](N)$ in \eqref{eq:EA-MAC-capacity-region} is achievable, and hence a natural inner bound on the true entanglement-assisted capacity region of a MAC.
We expect that the latter is given by the regularized formula
\begin{align}
\Cea[d] (N) = \overline{\bigcup_{n\in\mathbb{N}} \frac{1}{n} \CeaO[d](N^{\times n})},
\end{align}
where $\overline{X}$ denotes the closure of a set $X$.
For the developments in \Cref{sec:undecidability}, we also define the \textit{entanglement-assisted achievable rate region}
\begin{align}
\CeaO (N) \coloneqq \bigcup_{d\in\mathbb{N}} \CeaO[d](N),
\end{align}
which is achievable by the two senders sharing entanglement on quantum systems of arbitrarily large but finite local dimension.

\section{Encoding a Non-local Game in a MAC}\label{sec:non-local-mac}

The following construction of a classical multiple access channel in terms of a non-local game is our main object of study.
It is inspired by a similar construction of an interference channel (two senders, two receivers) in terms of the CHSH game in \cite{QuekShor}, and can also be found in unpublished work by N\"{o}tzel and Winter \cite{NW17}, a portion of which has appeared in \cite{Noe19}.
Given a promise-free\footnote{Note that we can turn any game with promise into a promise-free one, as explained in \Cref{sec:non-local}.} non-local game $G = (\cX_1,\cX_2,\cY_1,\cY_2,W)$, we define the classical MAC $N_G \colon (\cX_1 \times \cY_1) \times (\cX_2 \times \cY_2) \rightarrow \cX_1 \times \cX_2$ as
\begin{align}
N_G(\hat{x}_1,\hat{x}_2 | x_1,y_1; x_2,y_2) &\coloneqq \begin{cases}
\delta_{x_1\hat{x}_1}\delta_{x_2 \hat{x}_2} & \text{if }(x_1,x_2,y_1,y_2) \in W,\\
(|\cX_1| |\cX_2|)^{-1} & \text{else}.
\end{cases}
\label{eq:non-local-mac}
\end{align}

In the above construction, to each player of the game we associate a sender in the MAC scenario with input alphabet $\cX_i\times \cY_i$ for $i=1,2$.
If the two senders input a question-answer tuple $(x_1,x_2,y_1,y_2)\in W$ that wins the non-local game $G$, the channel outputs the question pair $(x_1,x_2)$; otherwise, the channel outputs a question pair drawn uniformly at random.
In the following, for $i=1,2$ we denote by $X_i\sim \pi_{X_i}$ the random variables corresponding to the questions for Alice and Bob, by $Y_i\sim p_{Y_i|X_i}\pi_{X_i}$ the random variables corresponding to the answers, and by $Z$ the random variable corresponding to the output of the channel $N_G$ defined in \eqref{eq:non-local-mac} taking values in $\cX_1\times \cX_2$.

As discussed in \Cref{sec:macs}, the capacity region of a MAC is computed in terms of a product probability distribution $p_{X_1Y_1}(x_1,y_1)p_{X_2Y_2}(x_2,y_2)$ on the set of inputs to $N$.
For the MAC \eqref{eq:non-local-mac}, we can think of this input distribution in the following way: Given a product probability distribution $\pi(x_1,x_2) = \pi_{X_1}(x_1) \pi_{X_2}(x_2)$ on the question set, the players produce answers $y_i$ to the game according to the probabilistic strategy $p_{Y_1|X_1}(y_1|x_1) p_{Y_2|X_2}(y_2|x_2)$ on which they agreed prior to starting the game.
This allows us to connect the sum rate capacity of the channel $N_G$ to the winning probability $\omega(G,\pi)$ as follows:

\begin{prop}\label{prop:sum-capacity}
	Let $G=(\cX_1,\cX_2,\cY_1,\cY_2,W)$ be a promise-free non-local game, $\pi(x_1,x_2) = \pi_{X_1}(x_1) \pi_{X_2}(x_2)$ a probability distribution on the questions set $\cX_1\times \cX_2$, and $p_{Y_1|X_1}(y_1|x_1) p_{Y_2|X_2}(y_2|x_2)$ a probabilistic strategy for Alice and Bob.
	For the MAC $N_G$ defined in terms of $G$ according to \eqref{eq:non-local-mac}, let $X_i, Y_i, Z$ be the random variables corresponding to the questions, answers, and channel output, respectively, as described above.
	We then have
	\begin{align}
	I(X_1Y_1X_2Y_2; Z) = H(Z) - p_L (\log |\cX_1| + \log |\cX_2|),
	\end{align}
	where $p_L = \sum_{(x_1,y_1,x_2,y_2)\notin W} \Pr(x_1,y_1,x_2,y_2)$ denotes the losing probability given the distribution $\pi(x_1,x_2)$ on the questions set and the probabilistic strategy $p_{Y_1|X_1}(y_1|x_1) p_{Y_2|X_2}(y_2|x_2)$.
\end{prop}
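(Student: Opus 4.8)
The plan is to expand the mutual information as $I(X_1Y_1X_2Y_2;Z) = H(Z) - H(Z \mid X_1Y_1X_2Y_2)$, so that the claimed identity is equivalent to showing $H(Z \mid X_1Y_1X_2Y_2) = p_L(\log|\cX_1| + \log|\cX_2|)$; the term $H(Z)$ then cancels and the proposition follows at once.

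The only real step is to compute the conditional entropy by conditioning on a fixed question--answer tuple and reading off the channel law \eqref{eq:non-local-mac}. If $(x_1,y_1,x_2,y_2)\in W$, the channel $N_G$ outputs $(x_1,x_2)$ deterministically, so $H(Z \mid X_1{=}x_1,Y_1{=}y_1,X_2{=}x_2,Y_2{=}y_2) = 0$. If $(x_1,y_1,x_2,y_2)\notin W$, the output $Z$ is uniform on $\cX_1\times\cX_2$ independently of everything else, so this conditional entropy equals $\log(|\cX_1||\cX_2|) = \log|\cX_1| + \log|\cX_2|$. Averaging over the joint law $\pi_{X_1}(x_1)\pi_{X_2}(x_2)\,p_{Y_1|X_1}(y_1|x_1)\,p_{Y_2|X_2}(y_2|x_2)$ of $(X_1,Y_1,X_2,Y_2)$, only the losing tuples contribute, each weighted by its probability, giving $H(Z \mid X_1Y_1X_2Y_2) = p_L(\log|\cX_1|+\log|\cX_2|)$ with $p_L$ exactly the losing probability appearing in the statement.

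I do not expect any genuine obstacle: the argument is a one-line chain-rule expansion followed by a direct evaluation of a conditional entropy, using nothing beyond the definition of $N_G$. The one point worth stating carefully is that this identity uses neither the product structure of the question distribution $\pi$ nor that of the strategy $p_{Y_1|X_1}p_{Y_2|X_2}$ --- it holds for an arbitrary joint distribution on $(X_1,Y_1,X_2,Y_2)$; the product structure becomes relevant only when \Cref{prop:sum-capacity} is later combined with the Ahlswede--Liao characterization \eqref{eq:MAC-capacity-region}, whose achievable region is built from product input distributions.
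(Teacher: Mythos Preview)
Your proposal is correct and follows essentially the same approach as the paper: expand $I(X_1Y_1X_2Y_2;Z)=H(Z)-H(Z\mid X_1Y_1X_2Y_2)$, then evaluate the conditional entropy pointwise using the channel definition \eqref{eq:non-local-mac}, noting it vanishes on winning tuples and equals $\log(|\cX_1||\cX_2|)$ on losing tuples. Your additional remark that the identity does not require the product structure of the input distribution is a valid observation not made explicit in the paper.
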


\begin{proof}
	We first expand the mutual information $I(X_1Y_1X_2Y_2; Z)$ as
	\begin{align}
	I(X_1Y_1X_2Y_2; Z) = H(Z) - H(Z|X_1 Y_1 X_2 Y_2).
	\end{align}
	Setting $d=|\cX_1||\cX_2|$ and recalling that $W\subset \cX_1\times \cY_1 \times \cX_2 \times \cY_2$ is the winning set for $G$, the conditional entropy can be expressed as
	\begin{align}
	H(Z|X_1 Y_1 X_2 Y_2) &= \sum_{x_1,y_1,x_2,y_2} \prob(x_1,y_1,x_2,y_2) H(Z|X_1=x_1,Y_1=y_1,X_2=x_2,Y_2=y_2)\\
	&= \sum_{(x_1,y_1,x_2,y_2)\notin W} \prob(x_1,y_1,x_2,y_2) H(Z|X_1=x_1,Y_1=y_1,X_2=x_2,Y_2=y_2)\\
	&= \log d\, \sum_{(x_1,y_1,x_2,y_2)\notin W} \prob(x_1,y_1,x_2,y_2)\\
	&= p_L \log d,
	\end{align}
	where the second equality follows since for $(x_1,y_1,x_2,y_2)\in W$ the channel $N_G$ outputs $(x_1,x_2)$ deterministically, and hence $H(Z|X_1=x_1,Y_1=y_1,X_2=x_2,Y_2=y_2)=0$ in this case.
\end{proof}

\begin{prop}\label{prop:classical-bound}
	Let $G=(\cX_1,\cY_1,\cX_2,\cY_2,W)$ be a promise-free non-local game with $\omega_U(G) < 1$, and consider the MAC $N_G$ defined as in \eqref{eq:non-local-mac}.
	Using the same notation as in \Cref{prop:sum-capacity}, for all $0<\delta < -\log \omega_U(G)$ there exists an $\eps > 0$ such that
	\begin{align}
		I(X_1 Y_1 X_2 Y_2;Z) \leq \max\lbrace (1-\eps) (\log |\cX_1| + \log |\cX_2|), \log |\cX_1| + \log |\cX_2| - \delta\rbrace.\label{eq:capacity-upper-bound}
	\end{align} 
	For a given $\delta>0$ the maximal value of $\eps$ is given by the $\eps^*$ satisfying
	\begin{align}
	\frac{\delta+h(\eps^*)}{1-\eps^*} = \delta(\eps^*\|1-\omega_U(G)),\label{eq:eps-from-delta}
	\end{align}
	where $h(x) = -x \log{x} - (1-x) \log{(1-x)}$ denotes the binary entropy and $\delta(x\|y)\coloneqq x\log\frac{x}{y} + (1-x)\log\frac{1-x}{1-y}$ denotes the binary relative entropy.
\end{prop}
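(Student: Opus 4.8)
The plan is to invoke \Cref{prop:sum-capacity} and then run a case analysis on the losing probability $p_L$. Write $d := |\cX_1||\cX_2|$, so \Cref{prop:sum-capacity} reads $I(X_1Y_1X_2Y_2;Z) = H(Z) - p_L\log d$. Since $Z$ takes values in $\cX_1\times\cX_2$ we always have $H(Z)\le\log d$, hence $I(X_1Y_1X_2Y_2;Z)\le(1-p_L)\log d$. Consequently, if $p_L\ge\eps$ then $I(X_1Y_1X_2Y_2;Z)\le(1-\eps)\log d$, which is the first term in \eqref{eq:capacity-upper-bound}, and we are done. It therefore suffices to treat the regime $p_L<\eps$, where we must show $I(X_1Y_1X_2Y_2;Z)\le\log d-\delta$, i.e.\ that $H(Z)$ is bounded away from $\log d$ by at least $\delta-p_L\log d$.

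In that regime I would exploit that $\omega_U(G)<1$ keeps the output of $N_G$ far from uniform. Decomposing on whether the game is won, the law of $Z$ is $r=(1-p_L)\mu+p_L\,u$, where $u$ is uniform on $\cX_1\times\cX_2$ and $\mu(\hat x_1,\hat x_2)\propto\pi_{X_1}(\hat x_1)\pi_{X_2}(\hat x_2)\,w(\hat x_1,\hat x_2)$ is the output distribution conditioned on winning, with $w(x_1,x_2)$ the winning probability of the fixed strategy on the question $(x_1,x_2)$. The key structural fact is that every (possibly probabilistic) strategy satisfies $\tfrac1d\sum_{x_1,x_2}w(x_1,x_2)\le\omega_U(G)$ — this is just the definition of $\omega_U(G)$ as the maximal winning probability on uniform questions. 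By Cauchy--Schwarz this gives $\sum_{x_1,x_2}\sqrt{\pi_{X_1}(x_1)\pi_{X_2}(x_2)\,w(x_1,x_2)}\le\sqrt{\omega_U(G)\,d}$, i.e.\ the unnormalised winning part of $r$ has small $\ell_{1/2}$-type norm. Feeding this into a lower bound on $D(r\Vert u)=\log d-H(Z)$ — e.g.\ by coarse-graining $\cX_1\times\cX_2$ to a binary alphabet and applying the data-processing inequality, or directly via monotonicity of R\'enyi entropies, $H(\mu)\le H_{1/2}(\mu)=2\log\sum\sqrt{\mu}$, combined with the grouping bound $H(Z)\le(1-p_L)H(\mu)+p_L\log d+h(p_L)$ — produces an explicit estimate $I(X_1Y_1X_2Y_2;Z)\le(1-p_L)\log d-\Phi(p_L)$, where $\Phi$ is assembled from binary entropies and the binary relative entropy $\delta(\,\cdot\,\Vert\,1-\omega_U(G))$, the second argument being the minimal losing probability on uniform questions.

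The last step is to optimize. Maximizing the bound over product distributions $\pi_{X_1}\pi_{X_2}$ and strategies consistent with the above constraint (one may relax to arbitrary distributions on $\cX_1\times\cX_2$, since the product restriction is only used in the favourable direction) turns the implication ``$p_L<\eps\Rightarrow I\le\log d-\delta$'' into a one-dimensional inequality in $p_L$. Setting the derivative to zero — equivalently, equating the two candidate upper bounds at the crossover point $p_L=\eps^*$ — yields the stated relation $\frac{\delta+h(\eps^*)}{1-\eps^*}=\delta(\eps^*\Vert 1-\omega_U(G))$; one then checks that $0<\eps^*<1-\omega_U(G)$ and that a solution exists exactly when $\delta<-\log\omega_U(G)$, since as $\eps\to0^+$ the left side tends to $\delta$ while the right side tends to $-\log\omega_U(G)$, and the two sides are monotone in opposite directions on $(0,1-\omega_U(G))$.

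I expect the crux to be the middle step: obtaining a bound on $H(Z)$ that is simultaneously sharp enough to reproduce the precise $\eps^*$ and valid for probabilistic, not merely deterministic, strategies — for a deterministic strategy the winning set literally has at most $\omega_U(G)\,d$ elements and one can argue combinatorially, but a general probabilistic strategy must be controlled by the $\ell_1/\ell_{1/2}$ estimate above. The remaining manipulations (the grouping and data-processing inequalities, the binary-entropy algebra, and solving the transcendental optimality condition) are routine.
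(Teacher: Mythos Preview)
Your overall architecture matches the paper's: split on $p_L\ge\eps$ versus $p_L<\eps$, and in the small-$p_L$ regime use the mixture bound $H(Z)\le h(p_L)+(1-p_L)H(Z\mid W{=}1)+p_L\log d$ coming from the win indicator $W$ (the paper writes this as $H(W)+H(Z\mid W)\ge H(Z)$). The divergence is entirely in how you control $H(Z\mid W{=}1)=H(\mu)$, and there your proposed tools do not recover the stated $\eps^*$.

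The Cauchy--Schwarz/R\'enyi-$\tfrac12$ route yields $\sum\sqrt{\mu}\le\sqrt{\omega_U(G)\,d/(1-p_L)}$ and hence $H(\mu)\le H_{1/2}(\mu)\le\log d+\log\omega_U(G)-\log(1-p_L)$. Feeding this back gives an upper bound on $I$ in which $\omega_U(G)$ enters only through $\log\omega_U(G)$, never through $\delta(\,\cdot\,\Vert\,1-\omega_U(G))$. You do obtain \emph{some} $\eps>0$ for every $\delta<-\log\omega_U(G)$, but the optimality condition you would write down is not \eqref{eq:eps-from-delta}; the R\'enyi relaxation is genuinely lossy here. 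Your alternative suggestion, coarse-graining $D(r\Vert u)$ to a binary alphabet, also stalls because the win/lose indicator is not a function of the output symbol $z$, so there is no natural two-cell partition of $\cX_1\times\cX_2$ to process through.

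The paper's missing idea is to run data-processing on the \emph{question} distribution rather than the output. From the grouping bound one extracts $H(X_1X_2)\ge\log d-\gamma$ with $\gamma=(\delta+h(p_L))/(1-p_L)$, i.e.\ $D(\pi_{X_1}\pi_{X_2}\Vert\pi_U)\le\gamma$. Now apply the data-processing inequality for relative entropy with respect to the stochastic map $(x_1,x_2)\mapsto\chi_W\circ Q$: draw answers from the optimal strategy $Q$ and output the single bit ``win/lose''. Pushing $\pi_{X_1}\pi_{X_2}$ and $\pi_U$ through this channel produces Bernoulli$(p_L)$ and Bernoulli$(q_L)$ respectively, with $q_L\ge 1-\omega_U(G)$, so $\gamma\ge\delta(p_L\Vert q_L)\ge\delta(p_L\Vert 1-\omega_U(G))$. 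Comparing the monotone-increasing left side $\gamma(p_L)$ (limit $\delta$ as $p_L\to0$) with the monotone-decreasing right side (limit $-\log\omega_U(G)$) gives exactly the crossing condition \eqref{eq:eps-from-delta}. Everything else in your outline is essentially the paper's argument; this one data-processing step on $D(\pi\Vert\pi_U)$ is what you are missing.
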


The strategy of the proof of \Cref{prop:classical-bound} is the following: the goal of the proposition is to provide an upper bound on the sum rate capacity that separates it from the maximal value $\log|\cX_1| + \log |\cX_2|$.
By the formula given in \Cref{prop:sum-capacity}, the maximal value $\log|\cX_1| + \log |\cX_2|$ is attained if and only if $p_L$ vanishes and $H(Z)$ attains its maximal value.
We therefore need to show that we cannot have $p_L \approx 0$ and $H(Z)\approx \log|\cX_1| + \log |\cX_2|$ at the same time.

\begin{proof}[Proof of \Cref{prop:classical-bound}]
	We again set $d=|\cX_1||\cX_2|$.
	For the purpose of bounding the sum rate capacity $I(X_1 Y_1 X_2 Y_2;Z)$ away from the maximal value $\log d$, we can assume without loss of generality that the losing probability $p_L=1-\omega(G,\pi)$ for Alice and Bob is strictly positive, $p_L>0$:
	In case $p_L=0$, the probability distribution $\pi$ on $\cX_1\times \cX_2$ necessarily has support $\supp \pi$ strictly contained in $\cX_1\times \cX_2$, as by assumption the game $G$ cannot be won with certainty on receiving any one of the full set $\cX_1\times \cX_2$ of questions.
	Hence, 
	\begin{align}
	I(X_1 Y_1 X_2 Y_2;Z) \leq \log |\supp \pi| \leq \log{(d-1)} < \log d
	\end{align}
	in this case, as Alice and Bob have to lose on at least one question pair.
	Furthermore, we can assume w.l.o.g.~that $p_L \leq 1-\omega_U(G)$, since $p_L>1-\omega_U(G)$ and $\omega_U(G)<1$ imply that 
	\begin{align}
	I(X_1 Y_1 X_2 Y_2;Z) < \omega_U(G) \log d < \log d.
	\end{align}
	Therefore, we may assume that $0 < p_L \leq 1-\omega_U(G)$ for the remainder of the proof.
	
	We prove the assertion of the theorem by contradiction.
	To this end, assume that
	\begin{align}
	H(Z) &\geq \log d - \delta \label{eq:entropy-assumption}
	\end{align}
	for some $0 < \delta < -\log \omega_U(G)$.
	Define a random variable $W$ by
	\begin{align}
	W = \begin{cases}
	1 & \text{Alice and Bob win the game;}\\
	0 & \text{Alice and Bob lose the game,}
	\end{cases}
	\label{eq:W}
	\end{align}
	taking values $1$ and $0$ with probability $1-p_L$ and $p_L$, respectively.
	By the non-negativity of conditional entropy and \eqref{eq:entropy-assumption} we have
	\begin{align}
	H(W) + H(Z|W)  = H(ZW)\geq H(Z) \geq \log d - \delta.\label{eq:conditional-entropy-nonneg}
	\end{align}
	Expanding the left-hand side of \eqref{eq:conditional-entropy-nonneg} gives
	\begin{align}
	h(p_L) + (1-p_L) H(X_1 X_2) + p_L \log d \geq \log d - \delta,
	\end{align}
	which can be rearranged to
	\begin{align}
	H(X_1X_2) &\geq \log d - \gamma
	\end{align}
	with $\gamma \coloneqq \frac{\delta+ h(p_L)}{1-p_L}$.
	Observe that $D( \pi_{X_1} \pi_{X_2} \| \pi_U) = \log d - H(X_1 X_2)$, and hence
	\begin{align}
	\gamma \geq D( \pi_{X_1} \pi_{X_2} \| \pi_U).\label{eq:gamma-bound}
	\end{align}
	
	Let now $Q = Q_{Y_1|X_1} Q_{Y_2|X_2}$ be the optimal probabilistic strategy for $G$ given the distribution $\pi_{X_1} \pi_{X_2}$ on the questions, and denote by $q_L$ the losing probability of the same strategy $Q$ with questions drawn uniformly at random. 
	Furthermore, let $\chi_W\colon \cX_1 \times \cY_1 \times \cX_2 \times \cY_2\to \lbrace 0,1\rbrace$ be the characteristic function of the winning set $W\subset  \cX_1 \times \cY_1 \times \cX_2 \times \cY_2$.
	Applying the data-processing inequality with respect to $\chi_W\circ Q$ to \eqref{eq:gamma-bound}, we obtain
	\begin{align}
	\gamma \geq \delta(p_L \| q_L) \geq \delta(p_L \|1-\omega_U(G) ),\label{eq:gamma-inequality}
	\end{align}
	where $\delta(x\|y)\coloneqq x\log\frac{x}{y} + (1-x)\log\frac{1-x}{1-y}$ denotes the binary relative entropy, and the second inequality follows from the monotonicity of $y\mapsto \delta(x\|y)$ for $y\geq x$ and the fact that $q_L \geq 1-\omega_U(G)$.	
	
	The function $\gamma(x)=\frac{\delta+h(x)}{1-x}$ is monotonically increasing for all $x>0$, and $\lim_{x\to 0} \gamma(x) = \delta < -\log \omega_U(G)$ by assumption.
	On the other hand, $x\mapsto \delta(x\|1-\omega_U(G))$ is monotonically decreasing for $x\in[0,1-\omega_U(G)]$, and $\lim_{x\to 0} \delta(x \|1-\omega_U(G) ) = -\log \omega_U(G)$.
	Hence, there exists an $\eps>0$ such that \eqref{eq:gamma-inequality} is violated for all $p_L < \eps$, which means that we either have $p_L \geq \eps$ or the assumption \eqref{eq:entropy-assumption} is false.
	In the first case, 
	\begin{align}
	 	I(X_1 Y_1 X_2 Y_2;Z) = H(Z) - p_L \log d \leq (1-\eps) \log d,
	\end{align}
	while in the second case,
	\begin{align}
		I(X_1 Y_1 X_2 Y_2;Z) < (1-p_L)\log d-\delta < \log d - \delta.
	\end{align}
	By the arguments above, for a given $\delta>0$ the maximal value of $\eps$ is given by the $\eps^*$ satisfying
	\begin{align}
	\frac{\delta+h(\eps^*)}{1-\eps^*} = \delta(\eps^*\|1-\omega_U(G)),
	\end{align}
	which concludes the proof.
\end{proof}
\begin{rem}\label{rem:determine-optimal-delta}
	In applications of \Cref{prop:classical-bound}, the optimal (minimal) upper bound in \Cref{prop:classical-bound} can be obtained by optimizing the right-hand side of \eqref{eq:capacity-upper-bound} over $\delta\in (0,-\log\omega_U(G))$ and computing $\eps$ via \eqref{eq:eps-from-delta}.
\end{rem}
	
\section{Magic Square Game}\label{sec:magic-square-game}

Consider a $3\times 3$-matrix whose rows and columns are labeled by $r,c\in \lbrace 0,1,2\rbrace$, respectively.
The magic square game $G_{\MS}$ \cite{Mer90,Peres1990,Ara02b,BBT05} is a two-player game in which Alice and Bob receive questions $r,c\in \lbrace 0,1,2\rbrace$ respectively, labeling a row $r$ for Alice and a column $c$ for Bob.
They answer with 3-bit strings $s,t\in \lbrace 0,1\rbrace^3$, where the bits in $s,t$ correspond to the cells in $r,c$, respectively.
Alice and Bob win the game if the following three conditions are satisfied:
\begin{enumerate}
	\item the parity of Alice's bit string $s$ is even: $s_0\oplus s_1 \oplus s_2 = 0$;
	\item the parity of Bob's bit string $t$ is odd: $t_0 \oplus t_1 \oplus t_2 = 1$;
	\item the bit strings agree in the overlapping cell $(r,c)$: $s_c = t_r$.
\end{enumerate}

\subsection{Classical Strategies}\label{sec:magic-square-game-classical}
The two parity constraints for Alice's and Bob's bit strings $s$ and $t$ render any deterministic perfect classical strategy for $G_{\MS}$ impossible, since the latter corresponds to a fixed valid filling of the nine cells of the magic square with bits such that conditions 1-3 above are satisfied.
However, according to condition 1 the parity of all cells is even, while according to condition 2 this parity should be odd.

If the questions $(r,c)$ are drawn uniformly at random, the best deterministic strategy for Alice and Bob consists in filling 8 of the 9 cells with valid bits.
Hence, the optimal deterministic strategy has winning probability $8/9$, and in fact $\omega_U(G_{\MS})=8/9$ \cite{BBT05}.

\subsection{A Perfect Quantum Strategy}\label{sec:magic-square-game-quantum}
\textcite{BBT05} described the following perfect quantum strategy for the magic square game $G_{\MS}$ that is equivalent to the commuting observables strategy devised by \textcite{Mer90} and \textcite{Peres1990} in \Cref{fig:ms-quantum}:
Consider the 4-qubit entangled state
\begin{align}
|\psi\rangle_{A_1A_2B_1B_2} = \frac{1}{2} \left( |00\rangle_{A_1A_2}|11\rangle_{B_1B_2} + |11\rangle_{A_1A_2}|00\rangle_{B_1B_2} - |01\rangle_{A_1A_2}|10\rangle_{B_1B_2} - |10\rangle_{A_1A_2}|01\rangle_{B_1B_2} \right),
\label{eq:psi}
\end{align}
where qubits $A_1A_2$ are with Alice, and $B_1B_2$ are with Bob.
Furthermore, consider the following 2-qubit unitaries:
\begin{align}
\begin{aligned}
U_0 &= \frac{1}{\sqrt{2}}\begin{pmatrix}
\phantom{-}\ii & \phantom{-}0 & \phantom{-}0 & \phantom{-}\ii\\ \phantom{-}0 & -\ii & \phantom{-}1 & \phantom{-}0 \\ \phantom{-}0 & \phantom{-}\ii & \phantom{-}1 & \phantom{-}0\\ \phantom{-}1 & \phantom{-}0 & \phantom{-}0 & \phantom{-}\ii
\end{pmatrix} &
U_1 &= \frac{1}{2} \begin{pmatrix}
\phantom{-}\ii & \phantom{-}1 & \phantom{-}1 & \phantom{-}\ii\\ -\ii & \phantom{-}1 & -1 & \phantom{-}\ii\\ \phantom{-}\ii & \phantom{-}1 & -1 & -\ii\\-\ii & \phantom{-}1 & \phantom{-}1 & -\ii
\end{pmatrix} &
U_2 &= \frac{1}{2}\begin{pmatrix}
-1 & -1 & -1 & \phantom{-}1\\ \phantom{-}1 & \phantom{-}1 & -1 & \phantom{-}1\\ \phantom{-}1 & -1 & \phantom{-}1 & \phantom{-}1\\ \phantom{-}1 & -1 & -1 & -1
\end{pmatrix} \\[0.5em]
V_0 &= \frac{1}{2} \begin{pmatrix}
\phantom{-}\ii & -\ii & \phantom{-}1 & \phantom{-}1\\ -\ii & -\ii & \phantom{-}1 & -1\\ \phantom{-}1 & \phantom{-}1 & -\ii & \phantom{-}\ii\\ -\ii & \phantom{-}\ii & \phantom{-}1 & \phantom{-}1
\end{pmatrix} &
V_1 &= \frac{1}{2} \begin{pmatrix}
-1 & \phantom{-}\ii & \phantom{-}1 & \phantom{-}\ii\\ \phantom{-}1 & \phantom{-}\ii & \phantom{-}1 & -\ii\\ \phantom{-}1 & -\ii & \phantom{-}1 & \phantom{-}\ii\\ -1 & -\ii & \phantom{-}1 & -\ii
\end{pmatrix} &
V_2 &= \frac{1}{\sqrt{2}} \begin{pmatrix}
\phantom{-}1 & \phantom{-}0 & \phantom{-}0 & \phantom{-}1\\ -1 & \phantom{-}0 & \phantom{-}0 & \phantom{-}1\\ \phantom{-}0 & \phantom{-}1 & \phantom{-}1 & \phantom{-}0\\ \phantom{-}0 & \phantom{-}1 & -1 & \phantom{-}0
\end{pmatrix}
\end{aligned}
\end{align}
Upon receiving the questions $(r,c)$, Alice applies the unitary $U_r$ to her qubits $A_1A_2$, while Bob applies $V_c$ to his qubits $B_1B_2$.
They each measure their respective qubits of the resulting state $U_r \ox V_c |\psi\rangle$ in the computational basis and obtain measurement outcomes $s_0s_1$ and $t_0t_1$.
As a last step, they complete their 2-bit outcome with a third bit such that the parity conditions of the magic square game are satisfied: Alice chooses $s_2$ such that $s_0\oplus s_1\oplus s_2=0$, while Bob chooses $t_2$ such that $t_0\oplus t_1\oplus t_2 = 1$.
A lengthy but straightforward computation shows that this strategy indeed produces a valid answer pair $(s,t)$ for every possible question pair $(r,c)$.

\subsection{MAC Based on the Magic Square Game}\label{sec:magic-square-mac}

Specializing definition \eqref{eq:non-local-mac} to the magic square game $G_{\MS}$ described above, we set $\cR=\lbrace 0,1,2\rbrace $, $\cC=\cR$, $\cS= \lbrace 0,1\rbrace^3$, $\cT= \cS$, and consider the following channel:
\begin{align}
\begin{aligned}
\Nms \colon (\cR \times \cS) \times (\cC \times \cT) &\longrightarrow \cR \times \cS\\
\Nms(\hat{r},\hat{s}|r,s;c,t) &\coloneqq \begin{cases}
\delta_{r\hat{r}} \delta_{s\hat{s}} & \text{if } (r,s; c,t) \in W,\\
\frac{1}{9} & \text{else},
\end{cases}
\end{aligned}
\label{eq:magic-square-MAC}
\end{align}
where $W\subset \cR\times\cS\times\cC\times\cT$ is the subset of instances $(r,s; c,t)$ winning the magic square game.

Using the perfect quantum strategy for the magic square game detailed in \Cref{sec:magic-square-game-quantum}, for any question pair $(r,c)$ Alice and Bob can produce answers $(s,t)$ such that $(r,s,c,t)\in W$.
Hence, with a uniform distribution over the questions $\cR\times \cC$ they can achieve the maximal sum rate of $\log 9 \approx 3.16993$ for the magic-square-MAC \eqref{eq:magic-square-MAC}.
To bound the sum rate achievable by classical strategies corresponding to product input distributions on $(\cR\times\cS)\times(\cC\times\cT)$, our goal is to find the smallest upper bound on $I(RSCT;Z)$ given by \Cref{prop:classical-bound} (we again use capital Latin letters for the random variables corresponding to the question and answer sets, as well as $Z$ for the channel output random variable): 
\begin{align}
I(RSCT;Z) \leq \max\lbrace (1-\eps^*) \log 9, \log 9 - \delta\rbrace \eqqcolon u(\delta)
\label{eq:magic-square-upper-bound}
\end{align}
for some $\delta\in(0,\log \frac{9}{8})$ and the corresponding optimal $\eps^*$ determined through \eqref{eq:eps-from-delta}.
As explained in \Cref{rem:determine-optimal-delta}, we find the optimal $\delta^* = 0.03299$ (using, e.g., Mathematica), which yields $\eps(\delta^*) = 0.01040$ and $I(RSCT;Z) \leq u(\delta^*)= 3.13694$.
In \Cref{fig:magic-square-optimal-delta} we plot the upper bound \eqref{eq:magic-square-upper-bound} as a function of $\delta\in[0,\log 9/8]$.

\begin{figure}
	\centering
	\begin{tikzpicture}
	\begin{axis}[xlabel=$\delta$,ylabel=$u(\delta)$,xmin=0,xmax=0.169925,tick label style={/pgf/number format/fixed, /pgf/number format/precision=4},
	xtick = {0,0.025,0.05,0.075,0.1,0.125,0.15}
	]
	\addplot[thick,color=blue] table[x=d,y=u] {magic-square-ub.dat};
	\addplot[domain=0:0.169925,dashed,semithick,color=gray] {3.16993};
	\end{axis}
	\end{tikzpicture}
	\caption{Upper bounds on the classical sum rate for the multiple access channel based on the magic square game.\\
		Plotted is the upper bound $u(\delta)$ defined in \eqref{eq:magic-square-upper-bound} as a function of $\delta$, with $\eps=\eps(\delta)$ chosen maximally such that \eqref{eq:gamma-inequality} is violated.
		The minimum occurs at $\delta^* = 0.03299$ giving $\eps(\delta^*) = 0.01040$ and $u(\delta^*)= 3.13694$.}
	\label{fig:magic-square-optimal-delta}
\end{figure}

We can compare the upper bound $u(\delta^*)= 3.13694$ to a lower bound on the sum rate computed by numerically maximizing the mutual information $I(RSCT;Z)$ with respect to product probability distributions (see \eqref{eq:cap-region-constraints}).
Carrying out this optimization in MATLAB in repeated runs using different random starting points gives a lower bound of $2.84195$ on the true maximum.
Assuming that this value is close to the true maximum, this result suggests that our upper bound $u(\delta^*)= 3.13694$ on the sum rate can likely be further improved.
We also computed an inner bound on the capacity region $\cC$ of the MAC \eqref{eq:magic-square-MAC} using the method detailed in \cite[Sec.~II.A]{CPFV10}.
This inner bound on the capacity region is plotted in \Cref{fig:magic-square-cap-region}.

\begin{figure}[h]
	\centering
	\begin{tikzpicture}
	\begin{axis}[xlabel=$R_1$,ylabel=$R_2$,ymax=1.8,xmax=1.8,xmin = 0,ymin = 0,
	legend style = {at = {(0,1.1)},anchor = south west,column sep = 0.2cm}, legend cell align = left,]
	\addplot[thick, color=blue] table[x=r1,y=r2] {cap_region.dat};
	\addplot[thick,dashed,color=red,domain=0:1.1539] {1.68805};
	\addplot[thick,dashdotted,color=cyan,domain=1:2] {3.13694-x};
	\fill(axis cs:1.585,1.585) circle[color=green, radius=1];
	\addplot[thick,dashed,color=red] coordinates {(1.68805, 1.1539) (1.68805,0)};
	\addplot[thick,dashed,color=red] coordinates {(1.1539, 1.68805) (1.68805,1.1539)};
	\legend{Inner bound on $\cC(\Nms)$,Constraints on $\cC(\Nms)$ given by \eqref{eq:cap-region-constraints},Upper bound $u(\delta^*)$};
	\end{axis}
	\end{tikzpicture}
	\caption{Inner and outer bounds on the capacity region of the multiple access channel based on the magic square game.\\
		The inner bound on the capacity region $\cC(\Nms)$ of the MAC \eqref{eq:magic-square-MAC} based on the magic square game is shown in solid blue.
		Approximate values of the outer pentagonal bound on $\cC$ given by optimizing the individual constraints in \eqref{eq:cap-region-constraints} for $R_1$, $R_2$ and $R_1+R_2$ are marked by dashed red lines.
		The dash-dotted cyan line is the (optimized) upper bound on the sum rate from \Cref{prop:classical-bound}.
	The black dot is the rate pair $(\log 3,\log 3)$ achievable by the entanglement-assisted coding strategy explained in \Cref{sec:magic-square-game-quantum}.}
	\label{fig:magic-square-cap-region}
\end{figure}

We briefly comment on a different type of entanglement assistance for a MAC where each sender shares entanglement with the receiver.
This communication scenario was discussed by \textcite{HDW08} for quantum multiple access channels $\cN\colon A'B'\to C$ mapping quantum systems $A'$ in Alice's possession and $B'$ in Bob's possession to a quantum system $C$ in possession of the receiver Charlie.
In addition to entanglement assistance, quantum MACs have been studied in various other scenarios \cite{Winter01,YHD08,horodecki09,Qi2018,Sen18a}.

The following capacity region for entanglement-assisted quantum MACs is proved in \cite{HDW08}:
Let $|\phi\rangle_{AA'}$ and $|\psi\rangle_{BB'}$ be pure quantum states, and set $\omega_{ABC} = (\id_A \ox \id_B \ox \cN_{A'B'\to C})(\phi_{AA'}\ox\psi_{BB'})$.
Let $\cC_E(\cN,\phi,\psi)$ be the set of all non-negative rate pairs $(R_1,R_2)$ satisfying
\begin{align}
R_1 &\leq I(A;C|B)\\
R_2 &\leq I(B;C|A)\\
R_1 + R_2 &\leq I(AB;C),
\end{align}
where the quantum (conditional) mutual informations on the right-hand sides are evaluated on the state $\omega_{ABC}$.
Define $\widetilde{\cC}_E(\cN)$ as the union over all states $\phi$ and $\psi$.
Then the entanglement-assisted capacity region $\cC_E(\cN)$ of a quantum MAC $\cN$ is equal to
\begin{align}
\cC_E(\cN) = \overline{\bigcup_{n\in\mathbb{N}}\frac{1}{n} \widetilde{\cC}_E(\cN^{\ox n}) }.
\end{align}
Moreover, we have the following single-letter upper bound on the sum rate:
\begin{align}
R_1 + R_2 \leq \max_{\phi_{AA'},\psi_{BB'}} I(AB;C).\label{eq:q-sum-rate-bound}
\end{align}

We now specialize the above entanglement-assisted setting to classical MACs $N\colon \cA\times\cB \to \cZ$ as introduced in \Cref{sec:macs}.
Any classical channel necessarily completely dephases a quantum system with respect to some fixed basis.
Hence, choosing bases $\lbrace |i\rangle_A\rbrace$ and $\lbrace |i\rangle_B\rbrace$ and fixing pure quantum states $\phi_{AA'}$ and $\psi_{BB'}$, the joint input state of Alice and Bob for a classical MAC is of the form
\begin{align}
\sum_{i,j} p_i p_j \phi_{A}^i \ox |i\rangle\langle i|_{A'} \ox \psi_{B}^j \ox |j\rangle\langle j|_{B'},
\label{eq:joint-input-state}
\end{align}
where $\lbrace p_i\rbrace$ with $p_i = \tr (|i\rangle\langle i|_{A'}\phi_{AA'})$ and $\lbrace p_j\rbrace$ with $p_j = \tr (|j\rangle\langle j|_{B'}\psi_{BB'})$ are probability distributions, and $\phi^i_A = \frac{1}{p_i} \langle i|\phi|i\rangle_{A'}$ and $\phi^j_A = \frac{1}{p_j} \langle j|\psi|j\rangle_{B'}$.
The MAC $N$ maps the joint input state in \eqref{eq:joint-input-state} to a state
\begin{align}
\omega_{ABZ} = \sum_{i,j} p_i p_j N(k|i,j) \phi_{A}^i  \ox \psi_{B}^j \ox |k\rangle\langle k|_Z.
\end{align}
On the other hand, consider a classical state
\begin{align}
\theta_{ABZ} = \sum_{i,j} p_i p_j N(k|i,j) |i\rangle\langle i|_{A} \ox |j\rangle\langle j|_{B} \ox |k\rangle\langle k|_Z,
\end{align}
and observe that $\omega_{ABZ}$ can be obtained from $\theta_{ABZ}$ by a quantum operation that first measures the (classical) systems $AB$ in $\theta$ and depending on the outcome $(i,j)$ prepares the state $\phi_{A}^i  \ox \psi_{B}^j$.
Hence, by the data processing inequality for the quantum mutual information we have
\begin{align}
I(AB;Z)_\omega \leq I(AB;Z)_\theta,
\label{eq:sum-rate-classical-upper-bound}
\end{align}
and $I(AB;Z)_\theta$ is the classical mutual information with respect to the product probability distribution $p_i p_j$ appearing in the sum rate constraint for the classical MAC $N$ given in \eqref{eq:cap-region-constraints}.

From the above discussion and \eqref{eq:q-sum-rate-bound}, we conclude that for a classical MAC entanglement shared between each sender and the receiver cannot increase the achievable sum rate $R_1+R_2$.
In contrast, we showed in this section that entanglement shared between the senders can indeed increase the sum rate up to the maximal value.

\section{Linear System Games}\label{sec:linear-system-games}

In this section we discuss non-local games $G_{\LS}$ based on linear systems of equations \cite{cleve2014characterization}.
Let $Ax=b$ be an $m\times n$ linear system of equations over $\mathbb{F}_2$.
We denote by $V_i=\lbrace j\in[n]\colon A_{ij} \neq 0\rbrace$ the indices of variables appearing the $i$-th equation of the linear system.
In the linear system game, Alice receives as a question an index $i\in [m]$ labeling a row in the linear system.
She replies with a vector $y\in\mathbb{F}_2^n$ of values for $x$ such that $\sum_{j\in V_i} y_j = b_i$.
Bob receives as a question an index $j\in[n]$, and he answers with a bit $x_j$ corresponding to an assignment of the variable $x_j$.
Alice and Bob win the game if either $j\notin V_i$ or $y_j = x_j$.

A linear system game $G_{\LS}$ defined in terms of a linear system $Ax=m$ can be associated with a certain finitely-presented group $\Gamma(A,b)$ called a solution group.
The maximal winning probability using quantum strategies can then be related to approximate representations of $\Gamma(A,b)$ \cite{Slofstra2019}.
\textcite{SV18} showed that suitable approximate representations of $\Gamma(A,b)$ (giving rise to near-perfect quantum strategies) do exist provided the dimension of the representation space, called the hyperlinear profile, is large enough.
They exhibited a particular example $G_{\SV}$ of a linear system game based on a suitable solution group $\Gamma(A,b)$, for which the above observations can be translated into lower and upper bounds on the local dimension $d$ of any quantum strategy for $G_{\SV}$.
In terms of the losing probability $p_L = 1-\omega_U(G_{\SV})$ and constants $C,C'$, the following bounds are proved in \cite{SV18}:
\begin{align}
\frac{C}{p_L^{1/6}} \leq d \leq \frac{C'}{p_L^{1/2}}.
\label{eq:bounds-app}
\end{align}

\subsection{Limiting the Entanglement Assistance}

\begin{prop}\label{prop:limited-entanglement-assistance}
If Alice and Bob are constrained to quantum strategies with dimension at most $d$, then the sum rate capacity of $N_{G_{\SV}}$ is bounded away from perfect, i.e., $\log{m} + \log{n}$, by $\Theta{(\frac{1}{d^{13}})}$.
\end{prop}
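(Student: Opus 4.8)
The plan is to instantiate the bound of \Cref{prop:classical-bound} in the entanglement-assisted setting and combine it with the Slofstra--Vidick dimension estimates \eqref{eq:bounds-app}. Fix a $d$-dimensional encoding $E$ as in \eqref{eq:EA-encoding} together with a product input distribution $p_{A_1}p_{B_1}$ on $\cA_1\times\cB_1$. Since the channel $N_{G_{\SV}}\circ E$ factors through the random variable $(X_1,Y_1,X_2,Y_2)$ before producing its output $Z$, data processing gives $I(A_1B_1;Z)\leq I(X_1Y_1X_2Y_2;Z)$, and the proof of \Cref{prop:sum-capacity} (which nowhere uses that the law of $(X_1,Y_1,X_2,Y_2)$ is a product) shows $I(X_1Y_1X_2Y_2;Z)=H(Z)-p_L(\log m+\log n)$, where $p_L$ is the probability that the answers produced by $E$ lose $G_{\SV}$. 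So it suffices to bound $H(Z)-p_L(\log m+\log n)$ away from $\log m+\log n$, uniformly over $d$-dimensional $E$ and product input distributions.

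The first key input is that $d$-dimensional entanglement caps how well one can play $G_{\SV}$: the correlation \eqref{eq:entangled-correlation} post-composed with the deterministic relabelings $f_1,f_2$ is --- after absorbing the relabelings into the measurement operators --- a quantum model of local dimension $d$ for generating $(X_1,Y_1,X_2,Y_2)$, and conditioning on the generated questions $(X_1,X_2)$ gives a $d$-dimensional quantum strategy for $G_{\SV}$, to which the lower bound in \eqref{eq:bounds-app} applies. Writing $\omega_d^*$ for the supremum of the uniform-question winning probability of $G_{\SV}$ over $d$-dimensional quantum strategies, we therefore get $1-\omega_d^*\geq C^6/d^6>0$.

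Next I would run the argument of \Cref{prop:classical-bound} with $\omega_U(G_{\SV})$ replaced by $\omega_d^*$. Under the assumption $H(Z)\geq\log m+\log n-\delta$, the entropy manipulation there forces $D\big(p_{X_1X_2}\,\big\|\,\pi_U\big)\leq(\delta+h(p_L))/(1-p_L)$, and data processing through the map ``run the induced strategy, then record win or lose'', together with monotonicity of $y\mapsto\delta(p_L\|y)$ for $y\geq p_L$ and the bound $1-\omega_d^*\geq C^6/d^6$, gives $(\delta+h(p_L))/(1-p_L)\geq\delta\big(p_L\,\big\|\,1-\omega_d^*\big)$. Taking the supremum over $d$-dimensional encodings and product input distributions then produces, for each $0<\delta<-\log\omega_d^*$, an $\eps=\eps(\delta)>0$ with
\[
\CsumEA[d](N_{G_{\SV}})\leq\max\{(1-\eps)(\log m+\log n),\ \log m+\log n-\delta\}.
\]
It remains to choose $\delta$. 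Writing $q:=1-\omega_d^*$ and $L:=\log m+\log n$ (a constant, since $G_{\SV}$ is fixed), and using \eqref{eq:eps-from-delta} with the elementary estimates $\delta(\eps\|q)\approx q/\ln 2$ and $h(\eps)\lesssim\eps\,|\log\eps|$ valid for $\eps\ll q$, one checks that any $\delta$ polynomially small in $q$ satisfies $\eps(\delta)\,L>\delta$, so that $\CsumEA[d](N_{G_{\SV}})\leq\log m+\log n-\delta$; a crude admissible choice such as $\delta\asymp q^{13/6}$, combined with $q\geq C^6/d^6$, already yields the claimed bound $\log m+\log n-\Theta(d^{-13})$. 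That the gap does close is the complementary statement, supplied by the constructive half of \eqref{eq:bounds-app}: the explicit $d$-dimensional strategy with $p_L\leq(C')^2/d^2$, played with uniform questions, achieves $I(X_1Y_1X_2Y_2;Z)=H(Z)-p_L L\geq L-h(p_L)-p_L L$.

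The step I expect to be the real obstacle is the one invoking \eqref{eq:bounds-app}: one must argue carefully that an arbitrary $d$-dimensional encoding cannot play $G_{\SV}$ strictly better than a genuine $d$-dimensional quantum strategy does --- i.e., that the question-free correlation induced by \eqref{eq:entangled-correlation} and $f_1,f_2$, once conditioned on $(X_1,X_2)$, still has uniform-question value at most $\omega_d^*$ --- and that the resulting marginal law of $(X_1,X_2)$, which need not be a product distribution, does not spoil the data-processing step of \Cref{prop:classical-bound}. Everything after that is routine bookkeeping of the constants and exponents in \eqref{eq:eps-from-delta} and \eqref{eq:bounds-app}.
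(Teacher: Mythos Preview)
Your proposal is correct and follows essentially the same route as the paper: apply \Cref{prop:classical-bound} with the $d$-dimensional optimal winning probability $\omega_d^*$ in place of $\omega_U(G)$, invoke the Slofstra--Vidick bound $1-\omega_d^*\geq C/d^6$, and choose $\delta\asymp d^{-13}$; the paper carries out the balancing of \eqref{eq:eps-from-delta} by assuming $\eps^*<\delta$, bounding $h(\delta)\leq a\,\delta^{25/26}$, and applying Pinsker to reach a contradiction, whereas you estimate $\eps^*$ more directly, but the structure is identical. The reduction you flag as the ``real obstacle'' --- that an arbitrary $d$-dimensional encoding, conditioned on its generated questions, still induces a $d$-dimensional quantum strategy so that the data-processing step of \Cref{prop:classical-bound} goes through with $q_L\geq 1-\omega_d^*$ --- is indeed glossed over in the paper, which simply invokes \Cref{prop:classical-bound} without further comment.
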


\begin{proof}
Let $G_{\SV}$ be the linear system game defined in \cite{SV18}.
By the discussion above and \eqref{eq:bounds-app}, we have the following lower bound for the losing probability if Alice and Bob only use $d$-dimensional quantum strategies:
\begin{align}
    1 - \omega_{U} (G_{\SV}) \geq \frac{C_1}{d^{6}},
\end{align}
for some constant $C_1>0$. In order to use \eqref{prop:classical-bound}, we let $\delta = \frac{C_1}{d^{13}}$ and assume that $\eps^* < \delta$. For large $d$, we can upper-bound the left-hand side of \cref{eq:eps-from-delta} by 
\begin{align}
\frac{\delta+h(\delta)}{1-\delta} \geq \delta(\eps^*\|1-\omega(G_{\SV})), 	
\end{align}
where we used $h(\eps^*) \leq h(\delta)$ whenever $\delta < \frac{1}{2}$. Next, observe that for $\delta \in [0,\frac{1}{2}]$ the binary entropy term $h(\delta)$ is upper-bounded by $a \delta^{\alpha}$ for $\alpha < 1$ and $a$ large enough. Letting $\alpha = \frac{25}{26}$, we can underestimate the right-hand side via Pinsker's inequality and get
\begin{align}
\delta(\eps^*\|1-\omega_{U}(G_{\SV})) \geq 
\frac{2}{\ln{2}} \Big[ \eps^{*} - (1-\omega_{U}(G_{\SV})) \Big]^{2} \geq 
\frac{2}{\ln{2}} \Big[ \eps^{*} - \frac{C_1}{d^{6}} \Big]^{2} \geq
\frac{2}{\ln{2}} \Big[ \frac{C_1^{2}}{d^{12}} - 2 \delta \frac{C_1}{d^{6}} \Big]
\end{align}
Putting it all together, we conclude the following inequality 
\begin{align}\label{ineq: converse}
\frac{\delta+ a \delta^{\alpha}}{1-\delta} \geq 
\frac{2C_1^{2}}{\ln{2}} \Big[ \frac{1}{d^{12}} - 2 \frac{1}{d^{19}} \Big].
\end{align}
Observe that as $d$ goes to infinity, the right-hand side goes as $1/d^{12}$, while the left-hand side goes as $1/d^{12.5}$. At some large enough $d$, \eqref{ineq: converse} is violated. $\eps^*$ cannot be smaller than $\delta$ for large enough $d$. Hence, by \eqref{prop:classical-bound} we have the following upper bound on the sum rate capacity:
\begin{align}
    I(X_1 Y_1 X_2 Y_2;Z) \leq \log{n} + \log{m} - \frac{C_1}{d^{13}}
\end{align}
Since $d$-dimensional quantum strategies subsume all lower dimensional strategies, this converse provides a limit, if implicit, on how well $N_{G_{\SV}}$ can be used for strategies with small dimension.
\end{proof}
\subsection{Achievable Strategies Using $d$-dimensional Maximally Entangled States}

\begin{figure}[ht]
\centering
\begin{tikzpicture}
\draw (0,-.25) -- (1,-.25) node[pos=0,above right] {$X_1$};
\draw (0,-2.75) -- (1,-2.75) node[pos=0,above right] {$X_2$};
\draw (1,-.25) -- (1.5,.25) -- (5.5,.25);
\draw (1,-.25) -- (1.5,-.75) -- (2,-.75);
\draw (1,-2.75) -- (1.5,-2.25) -- (2,-2.25);
\draw (1,-2.75) -- (1.5,-3.25) -- (5.5,-3.25);
\draw (2,-.5) rectangle (4.5,-2.5) node[pos=.5] {$P(y_1,y_2|x_1,x_2)$};
\draw (4.5,-.75) -- (5.5,-.75) node[pos=0,above right] {$Y_1$};
\draw (4.5,-2.25) -- (5.5,-2.25) node[pos=0,above right] {$Y_2$};
\draw (5.5,.5) rectangle (9,-3.5) node[pos=0.5] {$N_{G_{\SV}}(z|x_1,y_1;x_2,y_2)$};
\draw (9,-1.5) -- (10,-1.5) node[pos=0.5,above right] {$Z$};
\end{tikzpicture}
\caption{Entanglement-assisted coding strategy for the multiple access channel based on a linear system game.\\
	The MAC $N_{G_{\SV}}$ is defined in terms of the linear system game $G_{\SV}$ discussed in \Cref{sec:linear-system-games}.
	The correlation $P$ produced by the quantum strategy is detailed in \cite{SV18}.}
\label{fig:ea-coding-strategy}
\end{figure}

In this section we prove the existence of a sequence of coding strategies for the MAC $N_{G_{\SV}}$ defined in terms of the $m\times n$-linear system game $G_{\SV}$ described above that achieves the rate pair $(\log m, \log n)$ in the achievable rate region in the limit $d\to\infty$.

\begin{prop}\label{prop:ideal-rate-achievable-limit}
	Let $G_{\SV}$ be the linear system game from \cite{SV18} associated with the $m\times n$-linear system $Ax=b$, and let $N_{G_{\SV}}$ be the MAC defined in terms of $G_{\SV}$ via \eqref{eq:non-local-mac}.
	Assume that the two players share a maximally entangled state $|\psi\rangle\in\mathbb{C}^d\ox \mathbb{C}^d$ of Schmidt rank $d$ sufficiently large.
	Then there is a coding strategy that achieves the rate pair $R = (R_1,R_2)$, where
	\begin{align}
	R_1 &= (1-p_L) \log m - (1-p_L) (f(d) \log(nm-1) + h(f(d)))  -\frac{p_L}{2} \log(nm-1) - h(p_L)\\
	R_2 &= (1-p_L) \log n - (1-p_L) (f(d) \log(nm-1) + h(f(d)))  -\frac{p_L}{2} \log(nm-1) - h(p_L).
	\end{align}
	For this coding strategy, both the losing probability $p_L$ and the function $f(d)$ vanish in the limit $d\to\infty$.
\end{prop}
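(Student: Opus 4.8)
The plan is to read this as a one-shot achievability statement: I would fix one explicit $d$-dimensional entanglement-assisted coding strategy and check that the claimed pair $(R_1,R_2)$ lies in the capacity region of the unassisted MAC it induces, which by \eqref{eq:EA-MAC-capacity-region} is contained in $\CeaO[d](N_{G_{\SV}})$. The strategy: let the senders' inputs be $X_1\in[m]$ and $X_2\in[n]$, drawn from the uniform product distribution on $[m]\times[n]$ (so $X_1,X_2$ are the $A_1,B_1$ of \Cref{fig:EA-MAC}); on input $(x_1,x_2)$ each sender measures its half of $|\psi\rangle\in\mathbb{C}^d\ox\mathbb{C}^d$ with the near-perfect quantum strategy of \textcite{SV18} for $G_{\SV}$, producing answers $(Y_1,Y_2)$, and the full tuple $(x_1,Y_1;x_2,Y_2)$ is fed into $N_{G_{\SV}}$, with output $Z\in[m]\times[n]$. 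Writing $E$ for this encoding, the composite $N_{G_{\SV}}\circ E$ is a bona fide discrete memoryless MAC with input alphabets $[m],[n]$ (see \Cref{fig:ea-coding-strategy}); since $X_1\perp X_2$, its capacity region is given by \eqref{eq:MAC-capacity-region}, and it suffices to verify the three inequalities of \eqref{eq:cap-region-constraints} for this input.

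Next I would extract the two error parameters. Let $W\in\{0,1\}$ be the win/lose indicator and $p_L=p_L(d)=\Pr(W=0)$ the average losing probability of the strategy against uniform questions; the upper bound in \eqref{eq:bounds-app} lets one take, at each $d$, the SV strategy of smallest losing probability, whence $p_L(d)=\mathcal{O}(d^{-2})\to 0$. I would then let $f(d)$ be a handle on the residual imperfection — concretely, the total-variation distance between the uniform distribution on $[m]\times[n]$ and the conditional question distribution $\Pr(X_1{=}i,X_2{=}j\mid W{=}1)\propto 1-p_L(i,j)$ — and argue $f(d)\to 0$, using that $m,n$ are fixed constants for $G_{\SV}$ and that the per-question losing probabilities all tend to $0$.

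The heart of the argument is then to lower bound the three mutual informations by conditioning on $W$. For each, say for $R_1$,
\begin{align}
I(X_1;Z\mid X_2)\ \geq\ I(X_1;Z\mid W,X_2)-H(W)\ \geq\ (1-p_L)\,I(X_1;Z\mid W{=}1,X_2)-h(p_L),
\end{align}
using $I(X_1;Z\mid W{=}0,X_2)\geq 0$ and $H(W)=h(p_L)$. On the event $W{=}1$ the channel is noiseless, $Z=(X_1,X_2)$, so $I(X_1;Z\mid W{=}1,X_2)=H(X_1\mid W{=}1,X_2)$, and a continuity-of-entropy (Fano-type) estimate against the deviation $f(d)$ gives $H(X_1\mid W{=}1,X_2)\geq\log m-f(d)\log(nm-1)-h(f(d))$; the same argument with $X_2$, with $X_1$, and with the pair $X_1X_2$ yields the bounds for $I(X_2;Z\mid X_1)$ and $I(X_1X_2;Z)$. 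Plugging these in shows that the stated $R_1,R_2$ — which carry an extra, harmless $\tfrac{p_L}{2}\log(nm-1)$ of slack — satisfy all three constraints of \eqref{eq:cap-region-constraints}, so $(R_1,R_2)\in\cR(X_1,X_2)\subseteq\cC(N_{G_{\SV}}\circ E)$; as $p_L(d),f(d)\to 0$, the rates converge to $(\log m,\log n)$.

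I expect the main obstacle to be quantitative rather than structural: since the SV strategy is only near-perfect, one must pull an honest bound $f(d)$ with $f(d)\to 0$ out of the approximate-representation estimates underlying \eqref{eq:bounds-app} (controlling the spread of the per-question losing probabilities, not merely their average) and then propagate $p_L$ and $f(d)$ through the conditioning-on-$W$ and Fano steps to land on the stated coefficients — deciding how much slack to charge to the winning versus the losing branch is the only genuinely fiddly choice. Verifying that $N_{G_{\SV}}\circ E$ is a legitimate MAC to which the single-letter formula \eqref{eq:MAC-capacity-region} applies with a product input is routine given \Cref{fig:EA-MAC}.
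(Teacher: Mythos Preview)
Your proposal is correct and follows essentially the same route as the paper: fix the SV quantum strategy with uniformly random questions, condition on the win indicator $W$, define $f(d)=\dtv(\pi_{X_1X_2}^{W},\pi_U)$, and invoke entropy continuity to lower-bound the three constraints in \eqref{eq:cap-region-constraints}. The obstacle you single out---controlling the per-question (not merely average) losing probabilities---is precisely what the paper resolves by citing Lemma~4.2 of \cite{SV18}, which turns an average losing probability $p_L$ into the per-question bound $\Pr(W{=}1\mid X_1X_2{=}x_1x_2)\geq 1-nm\,p_L$; the only cosmetic difference is that for the sum-rate constraint the paper bounds $H(Z\mid X_1X_2)$ via Fano's inequality rather than reusing the conditioning-on-$W$ step, but your version works there too.
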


\begin{proof}
	In order to prove the claim of the proposition, we make use of the following easily-verifiable entropic inequalities:
	\begin{align}
	I(A;B) - I(A;B|D) &= -H(A|B) + I(A;D) + H(A|BD) \geq -H(A|B) \label{eq:I}\\
	I(A;B|C) - I(A;B|CD) &= I(A;D|C) + H(D|ABC) - H(D|BC) \geq - H(D|BC) \label{eq:II}\\
	I(A;B|CD) - I(A;BC|D) &= -I(A;C|D)\label{eq:III}
	\end{align}
	
	Let $X_i,Y_i$ be the random variables associated to the questions and answers for players $i=1,2$, let $Z$ be the random variable associated to the output of the MAC $N_{G_{\SV}}$, and let $W$ be the random variable indicating a win defined in \eqref{eq:W}.
	We fix the following coding strategy: Alice and Bob draw the questions $x_1$ and $x_2$ uniformly at random, and produce $y_1$ and $y_2$ using the quantum strategy detailed in \cite{SV18} based on measuring a maximally entangled state $\psi$, as depicted in \Cref{fig:ea-coding-strategy}. 
	In terms of the general entanglement-assisted coding scenario described in \Cref{sec:EA-MAC}, this corresponds to setting $A_i=X_i$, $X'_i=Y_i$, and using the trivial post-processing $f_i(x_j,y_j|x_i,y_i) = \delta_{x_i,x_j}\delta_{y_i,y_j}$.
	By the right-hand inequality in \eqref{eq:bounds-app} (which is proved in Theorem 1.1 in \cite{SV18}), the above strategy has losing probability 
	\begin{align}
	p_L \leq \left(\frac{C'}{d}\right)^2 \label{eq:losing-prob-bound}
	\end{align} 
	for some constant $C'$.
	
	We first determine an achievable rate $R_1$ for the first sender satisfying $R_1\leq I(Z;X_1|X_2)$ (see \eqref{eq:cap-region-constraints} and the discussion in \Cref{sec:EA-MAC}).
	To this end, we use \eqref{eq:II} with the choices $A=Z$, $B=X_1$, $C=X_2$, $D=W$ to obtain
	\begin{align}
	I(Z;X_1| X_2) &\geq I(Z;X_1 | X_2 W) - H(W|X_1 X_2)\\
	&= I(Z; X_1 X_2|W) - I(Z;X_2|W) - H(W|X_1X_2)\\
	&\geq I(Z; X_1 X_2|W) - I(Z;X_2|W) - h(p_L)\\
	&= (1-p_L) \left[H(X_1 X_2|W=1) - H(X_2|W=1)\right] - h(p_L)\\
	&\geq (1-p_L) \left[ H(X_1 X_2|W=1) - \log n \right] -\frac{p_L}{2} \log(nm-1) - h(p_L).\label{eq:R1-bound}
	\end{align}
	In the second line we used \eqref{eq:III} and in the third line we used $H(W|X_1X_2)\leq H(W)\leq h(p_L)$.
	In the fourth line we used that, if Alice and Bob win the game ($W=1$), then the variable $Z$ is a deterministic function of $X_1X_2$ and hence $I(Z; X_1 Y_1 X_2 Y_2|W=1) = H(X_1 X_2|W=1)$, together with the fact that $I(Z; X_1 Y_1 X_2 Y_2|W=0) = 0$.
	Finally, in the last line we used the trivial bound $H(X_2|W=1) \leq \log |\cX_2| = n$ as well as the fact that $\frac{p_L}{2} \log(nm-1)\geq 0$.
	
	We now bound the entropy $H(X_1 X_2|W=1)$ in \eqref{eq:R1-bound} by considering the probability distribution 
	\begin{align}
	\pi_{X_1X_2}^W = \lbrace \Pr(X_1X_2=x_1 x_2|W=1)\rbrace_{x_1,x_2}.
	\end{align}
	Our goal is to show that $\pi_{X_1X_2}^W$ converges to the uniform distribution $\pi_U$ on $\cX_1\times\cX_2$ in total variation distance as $d\to\infty$.
	By continuity of entropy, this then implies that $H(X_1 X_2|W=1) \approx \log m + \log n$ with the approximation error vanishing in the limit $d\to\infty$.
	
	To show this claim, we use Bayes' theorem to express $\Pr(X_1X_2=x_1 x_2|W=1)$ as
	\begin{align}
	\Pr(X_1X_2=x_1 x_2|W=1) &= \frac{\Pr(W=1|X_1X_2=x_1x_2) \Pr(X_1X_2=x_1x_2)}{\Pr(W=1)}\\
	&= \frac{1}{nm}\frac{\Pr(W=1|X_1X_2=x_1x_2)}{\Pr(W=1)}.
	\end{align}
	Due to \eqref{eq:losing-prob-bound}, the winning probability satisfies 
	\begin{align}
	\Pr(W=1)=1-p_L \geq 1-(C'/d)^2.
	\label{eq:winning-probability-bound}
	\end{align}
	Moreover, by Lemma 4.2 in \cite{SV18} every strategy that achieves a winning probability of at least $1-p_L$ wins with probability $1-nm p_L$ on any question $(x_1,x_2)$, and hence 
	\begin{align}
	\Pr(W=1|X_1X_2=x_1x_2) \geq 1-nm p_L \geq 1-nm(C'/d)^2.\label{eq:winning-probability-questions-bound}
	\end{align}
	For the total variation distance $\dtv(\pi_{X_1X_2}^W,\pi_U)$, the bounds \eqref{eq:winning-probability-bound} and \eqref{eq:winning-probability-questions-bound} imply that
	\begin{align}
	\dtv(\pi_{X_1X_2}^W,\pi_U) &= \frac{1}{2} \sum_{x_1,x_2} \left| \Pr(X_1X_2=x_1x_2|W=1) - \frac{1}{nm} \right| \eqqcolon f(d)
	\end{align}
	for some non-negative function $f(d)$ that converges to zero as $d\to \infty$.
	By the continuity of entropy \cite{Zhang2007},
	\begin{align}
	H(X_1 X_2|W=1) \geq \log m + \log n - f(d) \log(nm-1) - h(f(d)),\label{eq:entropy-continuity}
	\end{align}
	and substituting this in \eqref{eq:R1-bound} yields $R_1 \leq I(Z;X_1| X_2)$ with
	\begin{align}
	R_1 \coloneqq (1-p_L) \log m - (1-p_L) (f(d) \log(nm-1) + h(f(d)))  -\frac{p_L}{2} \log(nm-1) - h(p_L).\label{eq:R1}
	\end{align}
	Using similar steps as above, we can also show that $R_2 \leq I(Z;X_2| X_1)$ with
	\begin{align}
	R_2 \coloneqq (1-p_L) \log n - (1-p_L) (f(d) \log(nm-1) + h(f(d))) -\frac{p_L}{2} \log(nm-1) - h(p_L).\label{eq:R2}
	\end{align}
	
	For the rate pair $(R_1,R_2)$ to be achievable, it remains to be shown that $R_1+R_2$ satisfies the sum rate constraint $R_1+R_2\leq I(Z;X_1X_2)$.
	To this end, we use \eqref{eq:I} with the choices $A=Z$, $B=X_1X_2$, $D=W$ to obtain
	\begin{align}
	I(Z; X_1 X_2) &\geq I(Z; X_1 X_2 | W) - H(Z| X_1 X_2)\\
	&= (1-p_L) H(X_1X_2|W=1) - H(Z| X_1 X_2)\\
	&\geq (1-p_L) (\log m + \log n) - (1-p_L)(f(d) \log(nm-1) + h(f(d))) - H(Z| X_1 X_2),\label{eq:sum-rate-bound}
	\end{align}
	which follows from the discussion above and \eqref{eq:entropy-continuity}.
	To bound the conditional entropy $H(Z| X_1 X_2)$, note that $\Pr(Z\neq X_1 X_2) = p_L \frac{nm-1}{nm}\leq p_L$, and hence we can apply Fano's inequality to obtain the bound
	\begin{align}
	H(Z|X_1X_2) \leq p_L \log(nm-1) + h(p_L).
	\end{align}
	Substituting this in \eqref{eq:sum-rate-bound} yields
	\begin{align}
	I(Z; X_1 X_2) &\geq (1-p_L) (\log m + \log n) - (1-p_L)(f(d) \log(nm-1) + h(f(d))) - p_L \log(nm-1) - h(p_L)\\
	&\geq (1-p_L) (\log m + \log n) - 2(1-p_L) (f(d) \log(nm-1) + h(f(d))) - p_L \log (nm-1) - 2h(p_L)\\
	&= R_1 + R_2
	\end{align}
	with $R_1$ and $R_2$ as in \eqref{eq:R1} and \eqref{eq:R2}, respectively.
	This finishes the proof.
\end{proof}

By \Cref{prop:ideal-rate-achievable-limit}, the achievable rate region $\CeaO[d](N_{G_{\SV}})$ gets arbitrarily close to the rate pair $(\log m, \log n)$ in the limit $d\to\infty$.
Hence, we have the following result:
\begin{cor}
	Let $G_{\SV}$ be the linear system game from \cite{SV18} associated with the $m\times n$-linear system $Ax=b$, and let $N_{G_{\SV}}$ be the MAC defined in terms of $G_{\SV}$ via \eqref{eq:non-local-mac}.
	Then the rate pair $(\log m,\log n)$ is contained in the closure of $\CeaO (N_{G_{\SV}})$.
\end{cor}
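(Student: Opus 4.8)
The plan is to read the corollary off directly from \Cref{prop:ideal-rate-achievable-limit} together with the definition $\CeaO(N_{G_{\SV}})=\bigcup_{d\in\bN}\CeaO[d](N_{G_{\SV}})$; there is essentially no new analytic work to do, only a verification that the coding strategy constructed in the proposition fits the entanglement-assisted template of \Cref{sec:EA-MAC}.

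First I would fix an integer $d$, taken large enough for the bounds in \Cref{prop:ideal-rate-achievable-limit} and \eqref{eq:bounds-app} to apply, and recall the strategy of that proposition: Alice and Bob share a maximally entangled state $|\psi\rangle\in\bC^d\ox\bC^d$, draw questions $x_1,x_2$ independently and uniformly at random, measure $\psi$ to produce answers $y_1,y_2$ via the Slofstra--Vidick strategy of \cite{SV18}, and use the identity post-processing. This is an instance of the encoding $E$ of \eqref{eq:EA-encoding} with $\cA_1=\cX_1$, $\cB_1=\cX_2$, $\cA_2=\cY_1$, $\cB_2=\cY_2$ and the Slofstra--Vidick measurements as POVMs, so $N_{G_{\SV}}\circ E$ is a MAC with input alphabet $\cX_1\times\cX_2$. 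Taking the uniform (hence product) distribution on these inputs, \Cref{prop:ideal-rate-achievable-limit} verifies the three constraints in \eqref{eq:cap-region-constraints} defining $\cR(X_1,X_2)$ for this MAC, so by the Ahlswede--Liao characterization \eqref{eq:MAC-capacity-region} the pair $(R_1(d),R_2(d))$ of \eqref{eq:R1}--\eqref{eq:R2} satisfies $(R_1(d),R_2(d))\in\cC(N_{G_{\SV}}\circ E)\subseteq\CeaO[d](N_{G_{\SV}})\subseteq\CeaO(N_{G_{\SV}})$.

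Next I would let $d\to\infty$. By \eqref{eq:losing-prob-bound} the losing probability obeys $p_L\le(C'/d)^2\to 0$, and by the proposition the total-variation quantity $f(d)$ also tends to $0$; since the binary entropy $h$ is continuous with $h(0)=0$, the correction terms $(1-p_L)\bigl(f(d)\log(nm-1)+h(f(d))\bigr)$, $\tfrac{p_L}{2}\log(nm-1)$, and $h(p_L)$ in \eqref{eq:R1}--\eqref{eq:R2} all vanish in the limit, whence $R_1(d)\to\log m$ and $R_2(d)\to\log n$. This exhibits $(\log m,\log n)$ as a limit of points of $\CeaO(N_{G_{\SV}})$, so it lies in the closure, proving the corollary. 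The only point requiring (minor) care is admissibility of the proposition's strategy in the definition \eqref{eq:EA-MAC-capacity-region} of $\CeaO[d]$ --- namely that $x_1,x_2$ are drawn from a product distribution and that the answer maps are genuine $d$-dimensional POVMs --- and both hold by the construction in \cite{SV18}. There is no real obstacle here, since all the analytic content has already been established in \Cref{prop:ideal-rate-achievable-limit}.
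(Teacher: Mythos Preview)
Your proposal is correct and follows exactly the same route as the paper, which simply states the corollary as an immediate consequence of \Cref{prop:ideal-rate-achievable-limit} without giving a separate proof. Your write-up just makes explicit the (routine) verifications --- that the strategy fits the template of \Cref{sec:EA-MAC} and that $(R_1(d),R_2(d))\to(\log m,\log n)$ --- which the paper leaves implicit.
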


\subsection{Undecidability of the Rate Region of a MAC}\label{sec:undecidability}

\Cref{prop:limited-entanglement-assistance,prop:ideal-rate-achievable-limit} show that there is a MAC $N_{G_{\SV}}$ defined in terms of the $m\times n$-linear system game $G_{\SV}$ such that the sum rate capacity $\CsumEA[d](N_{G_{\SV}})$ is bounded away from $(\log m,\log n)$ for any finite $d$, but the boundary of the $d$-entanglement-assisted single-letter capacity region $\CeaO[d](N_{G_{\SV}})$ gets arbitrarily close to $(\log m,\log n)$ in the limit $d\to\infty$.

Using a recent result by \textcite{Slofstra2019}, we can even prove the following: for a general linear system game $G_{\LS}$ and the corresponding MAC $N_{G_{\LS}}$, it is undecidable to determine if the maximal rate $(\log m,\log n)$ is achievable with finite-dimensional entanglement assistance:
\begin{prop}\label{prop:undecidability}
	Let $N_{G_{\LS}}$ be the MAC defined via \eqref{eq:non-local-mac} in terms of an $m\times n$-linear system game $G_{\LS}$.
	Then it is undecidable to determine if the rate pair $(\log m,\log n)$ belongs to $\CeaO(N_{G_{\LS}})$.
\end{prop}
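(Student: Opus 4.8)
The plan is to exhibit a computable many-one reduction from the problem that Slofstra \cite{Slofstra2019} proved undecidable — deciding whether a linear system game $G_{\LS}$ admits a perfect quantum strategy of some finite local dimension — to the problem of deciding whether $(\log m,\log n)\in\CeaO(N_{G_{\LS}})$. Since the map $G_{\LS}\mapsto N_{G_{\LS}}$ in \eqref{eq:non-local-mac} is an explicit finite construction and hence computable, the only thing to establish is the equivalence: $G_{\LS}$ has a perfect finite-dimensional quantum strategy $\iff$ $(\log m,\log n)\in\CeaO(N_{G_{\LS}})$.

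For the forward implication I would take a perfect quantum strategy on $\mathbb{C}^{d_0}\ox\mathbb{C}^{d_0}$ and feed it into the entanglement-assisted encoding $E$ of \Cref{sec:EA-MAC} with $A_1=X_1$, $B_1=X_2$ drawn uniformly and independently from $[m]$ and $[n]$, POVMs given by the strategy, and post-processings $f_1,f_2$ that merely concatenate a question with its answer. Perfectness forces every question-answer tuple produced to lie in $W$, so $N_{G_{\LS}}\circ E$ deterministically outputs $Z=(X_1,X_2)$; evaluating the three quantities in \eqref{eq:cap-region-constraints} for the product input $p_{X_1}p_{X_2}$ then gives $I(X_1;Z\mid X_2)=\log m$, $I(X_2;Z\mid X_1)=\log n$, and $I(X_1X_2;Z)=\log m+\log n$, so $(\log m,\log n)\in\cC(N_{G_{\LS}}\circ E)\subseteq\CeaO[d_0](N_{G_{\LS}})\subseteq\CeaO(N_{G_{\LS}})$.

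For the converse I would argue the contrapositive. Suppose no finite-dimensional perfect quantum strategy exists, and fix $d\in\mathbb{N}$. The collection of $d$-dimensional strategies (a pure shared state in $\mathbb{C}^d\ox\mathbb{C}^d$ together with the two families of POVMs, with answer alphabets bounded by those of the game) is compact, and the uniform-question winning probability is a polynomial, hence continuous, function on it; thus its supremum $\omega_U^{(d)}(G_{\LS})$ is attained and, by hypothesis, is strictly below $1$. \Cref{thm:separation} (equivalently \Cref{prop:classical-bound}) then supplies a constant $\eta_d>0$ with $\CsumEA[d](N_{G_{\LS}})\le\log m+\log n-\eta_d$, so $(\log m,\log n)\notin\CeaO[d](N_{G_{\LS}})$. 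Since this holds for every $d$ and $\CeaO(N_{G_{\LS}})=\bigcup_{d\in\mathbb{N}}\CeaO[d](N_{G_{\LS}})$ is a plain union rather than a closure, $(\log m,\log n)\notin\CeaO(N_{G_{\LS}})$. Combining the two directions gives the equivalence, and undecidability follows from \cite{Slofstra2019}.

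The step that needs the most care — more a verification than a genuine obstacle — is that \Cref{prop:classical-bound}/\Cref{thm:separation}, as literally stated, concerns probabilistic strategies, whereas here it is applied to $d$-dimensional quantum-assisted encodings. This is harmless: \Cref{prop:sum-capacity} only uses the existence of a joint law on $(X_1,Y_1,X_2,Y_2)$ and the quantity $p_L=\Pr[(X_1,Y_1,X_2,Y_2)\notin W]$, both of which a quantum strategy provides, and the data-processing step in the proof of \Cref{prop:classical-bound} only needs the losing probability under uniform questions to be at least $1-\omega_U^{(d)}(G_{\LS})>0$; so the bound holds verbatim with $\omega_U$ replaced by the $d$-dimensional quantum value. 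One should also note that $\eta_d$ may shrink as $d\to\infty$ — this does not affect the argument, precisely because $\CeaO$ is an unclosed union and the point $(\log m,\log n)$ itself is missed at every finite level $d$.
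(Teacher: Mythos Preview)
Your proposal is correct and follows essentially the same route as the paper's proof: both establish the biconditional ``$G_{\LS}$ has a perfect finite-dimensional quantum strategy $\iff (\log m,\log n)\in\CeaO(N_{G_{\LS}})$'' and then invoke Slofstra's undecidability result, with the forward direction via the obvious encoding and the converse via \Cref{prop:classical-bound}. You are in fact more careful than the paper in two places the paper leaves implicit --- the compactness argument guaranteeing $\omega_U^{(d)}(G_{\LS})<1$ is actually attained, and the observation that \Cref{prop:sum-capacity} and the proof of \Cref{prop:classical-bound} go through for $d$-dimensional quantum strategies with $\omega_U$ replaced by $\omega_U^{(d)}$ --- as well as in flagging that $\CeaO$ is an unclosed union so a per-level gap $\eta_d>0$ suffices even if $\eta_d\to 0$.
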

\begin{proof}
	Let $G_{\LS}$ be the game associated to the $m\times n$-linear system $Ax=b$.
	Then Corollary 1.3 in \cite{Slofstra2019} proves that it is undecidable to determine if $G_{\LS}$ has a perfect strategy in the set of finite-dimensional quantum correlations as defined in \eqref{eq:entangled-correlation}.
	If there is a perfect strategy, then by the construction of $N_{G_{\LS}}$ the two senders can code at the rate pair $(\log m, \log n)$ by drawing the questions $x_i$ uniformly at random and using the perfect strategy to produce $y_i$ such that $(x_1,y_1,x_2,y_2)\in W$.
	Conversely, if there is no perfect strategy and hence $\omega_U(G_{\LS})<1$, then for any finite $d$ the sum rate capacity $\CsumEA[d](N_{G_{\LS}})$ can be bounded away from $\log m + \log n$ using \Cref{prop:classical-bound}, and this separates the point $(\log m,\log n)$ from the entanglement-assisted rate-region $\CeaO(N_{G_{\LS}})$.
	Hence, the pair $(\log m,\log n)$ belongs to $\CeaO(N_{G_{\LS}})$ if and only if there is a perfect strategy for $G_{\LS}$, which is undecidable.
\end{proof}

\section{Hardness of Computing the Capacity Region of MACs}\label{sec:hardness}
Despite the availability of a single-letter characterization, as given by \eqref{eq:cap-region-constraints}, computing the capacity region of an arbitrary multiple access channel is a difficult task \cite{BW11}. The difficulty lies in the inherent non-convexity of the problem, i.e., the optimization is constrained to be over product distributions \cite{CPFV10}.  In this section, we show that deciding if a MAC can be used perfectly or not (up to $\Theta (\frac{1}{n^3})$) is NP-hard. This implies that deciding if an arbitrary point $(R_1, R_2)$ belongs to the capacity region to within an additive error of $\Theta (\frac{1}{n^3})$ is NP-hard.

\subsection{The PCP Theorem}
The results to follow rely on the probabilistically checkable proofs (PCP) theorem, which says that any language in the class NP admits a characterization via probabilistically checkable proofs \cite{ALMSS98, D07, V01}. More formally, let PCP$_{c, s}[
r(n), q(n)]$ be the class of all languages $L$ such that there exists a verifier $V$, which is free to use $\mathcal{O}(r(n))$ random bits and query a given proof $\mathcal{O}(q(n))$ times, with the following properties:
\begin{enumerate}
\item Completeness: If $x \in L$, then there exists a proof $P$ such that $V$ accepts with probability at least $c$.
\item Soundness: If $x \notin  L$, then $V$ accepts with probability at most $s$.
\end{enumerate}
Note that this can be considered a generalization of NP as NP = PCP$_{1,0} [0, \text{poly}(n)]$. The original PCP theorem says that $\text{NP} \subseteq \text{PCP}_{1,1/2}[\log{n}, 1]$ \cite{Arora1992}. To illustrate its implications, consider the canonical NP-complete language 3SAT for example. Take a Boolean formula $\psi$ in 3-conjunctive normal form (3CNF), i.e., it is a conjunction of clauses that are disjunctions of three literals. Note that a literal can be a Boolean variable or its negation. Say $\psi \notin $ 3SAT. A verifier exists such that, with access to logarithmic randomness and a constant number of queries to a given proof or witness, it will reject with non-trivially high probability. This suggests that proving a falsehood, e.g., $\psi \in$ 3SAT when that is not the case, typically involves making many errors. 

The PCP theorem can be equivalently formulated as a statement about the hardness of approximating NP-complete problems \cite{Feige1991, H01}. We will restrict our attention to the following formulation. 

\begin{thm}[PCP theorem; \cite{F98, SS06}]
	Given a 3-CNF-5 Boolean formula $\psi$, to decide whether $\psi$ has a satisfying assignment or that every assignment violates at least $(1 - c)$ fraction of the clauses in $\psi$ is NP-hard, for some constant $c < 1$.
	\label{thm:pcp-theorem}
\end{thm}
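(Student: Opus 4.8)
The plan is to establish the PCP theorem in the ``gap'' form stated, namely that distinguishing satisfiable bounded-occurrence 3-CNF formulas (the ``$5$'' meaning each variable appears in at most a constant number of clauses) from those in which no assignment satisfies more than a $c$-fraction of the clauses is NP-hard. I would follow Dinur's combinatorial gap-amplification route, which is cleanest to outline, rather than the original algebraic argument via arithmetization, low-degree testing, the sum-check protocol, and recursive proof composition. The first step is a \emph{vanishing-gap starting point}: encode an arbitrary NP problem (say circuit satisfiability) as a constraint graph, i.e.\ a $2$-CSP over a fixed constant alphabet $\Sigma$ in which every constraint touches exactly two variables. Here plain NP-hardness of \emph{exact} satisfiability is immediate; all the difficulty is in amplifying the gap from ``$1$ violated constraint out of $m$'' to ``a constant fraction violated''.

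The heart of the proof is \emph{gap amplification} by iteration. Each round performs three moves. (i) \emph{Preprocessing}: make the constraint graph $d$-regular for a fixed constant $d$ and turn it into an expander, by superimposing a constant-degree expander on the vertex set and replacing high-degree vertices by clusters tied together with equality constraints; this changes the relative gap by at most a constant factor. (ii) \emph{Powering}: pass to the $t$-th power graph for a large constant $t$, where each vertex is reassigned a value that is a function on its radius-$t$ neighborhood and each edge corresponds to a length-$2t$ walk; this multiplies the relative gap by $\Omega(t)$, capped at a fixed threshold. (iii) \emph{Alphabet reduction}: powering blows the alphabet up to size $|\Sigma|^{d^{O(t)}}$, so one composes with a constant-size assignment tester (PCP of proximity) to bring the alphabet back to a fixed constant, at the cost of another constant factor in the gap. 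Choosing $t$ large enough, each round at least doubles the relative gap while multiplying the instance size by a constant and keeping the alphabet fixed; after $O(\log m)$ rounds the gap is a constant and the instance is still polynomial in the original size.

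Finally, \emph{reduction to bounded-occurrence 3SAT}: convert the resulting constant-gap $2$-CSP over a constant alphabet into 3-CNF by the standard gadget, replacing each constraint by $O(1)$ clauses over Boolean variables that encode the alphabet symbols, which preserves the gap up to a constant factor. Then enforce the occurrence bound by splitting each overused variable into copies and tying the copies together with equality clauses laid along a constant-degree expander, à la Papadimitriou--Yannakakis; the expansion guarantees that the copies cannot disagree on more than a constant fraction of the equality clauses without violating a constant fraction of all clauses, so the gap survives up to a constant, and we land on a 3-CNF-$5$ instance.

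The main obstacle is the powering analysis, step (ii): one must show that a \emph{single worst-case} assignment to the power graph induces, on the original graph, an assignment whose fraction of violated constraints drops by only a factor $O(1/t)$ relative to the power graph's violation fraction. This requires the expander mixing lemma together with a second-moment / union-bound argument tracking where a random length-$2t$ walk first and last meets a ``faulty'' edge of the induced original-graph assignment; this is the combinatorial core of the whole theorem. The constant-size assignment-tester of step (iii), although only of fixed size, also needs its own (elementary but non-trivial) construction, and the bookkeeping ensuring the occurrence bound is exactly a small constant such as $5$ rather than merely $O(1)$ requires a careful final padding/cleanup argument.
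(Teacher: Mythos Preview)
The paper does not prove this theorem at all: it is stated with citations \cite{F98, SS06} and used as a black box in the subsequent hardness argument for the MAC capacity region. There is therefore no ``paper's own proof'' to compare against.

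Your outline is a faithful high-level sketch of Dinur's combinatorial proof of the PCP theorem (preprocess to an expander, power to amplify the gap, compose with a constant-size assignment tester to shrink the alphabet, iterate $O(\log m)$ times, then gadget down to bounded-occurrence 3-CNF via Papadimitriou--Yannakakis). That is a correct route to the result, and you correctly flag the powering step as the technical crux. But it is orders of magnitude more than what the paper requires or provides: for the purposes of this paper the PCP theorem is an imported result, and the only ``proof'' expected here is the citation.
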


\noindent Here, a formula $\psi$ is called 3-CNF-5 if it is a conjunction of $m$ clauses and each clause is a disjunction of exactly three distinct literals and each of the $n$ Boolean variables appears in exactly five clauses. Remark that the number of clauses $m$ is $\mathcal{O}(n)$. We call $\psi$ at most $c$-satisfiable for some $c \in [0,1]$ if some assignment satisfies $f$ fraction of its clauses, for  $f \in [0,c]$, and no assignments satisfies more than $c$ fraction of its clauses.

\subsection{The Basic Two-Prover Game}
We denote by $G_{\Ha}$ the non-local game version of the basic two-prover protocol introduced in \cite{H01}. Namely, given a 3-CNF-5 Boolean formula $\psi = C_{1} \wedge C_{2} \wedge ... \wedge C_{m}$ as input, where $C_{j} = y_{a_j} \lor y_{b_j} \lor y_{c_j}$, the referee does the following:
\begin{enumerate}
    \item Choose an integer $j \in \{1, $...$ , m\}$ uniformly at random and send $j$ to Alice. Choose $k \in \{a_j, b_j, c_j\}$ uniformly at random and send $k$ to Bob. 
    \item Receive an assignment for $C_j$ from Alice and a truth value for $x_k$ from Bob. They win if Alice's answer satisfies $C_j$ and the two agree on the value of $x_k$, otherwise they lose. 
\end{enumerate}
Let $\psi$ be at most $c$-satisfiable. Because the optimal strategy is deterministic, Bob will have an assignment to $\psi$. If the clause in the question to Alice is violated by Bob's assignment, then the best Alice can do is disagree with Bob on the value of one Boolean variable in the clause and hope that Bob did not receive it as a question. This implies that ${\omega}(G_{\Ha}) \leq \frac{2 + c}{3}$. Conversely, ${\omega}(G_{\Ha}) \leq \frac{2 + c}{3}$ implies that $\psi$ is at most $c$-satisfiable. To see this, note that if some assignment satisfies more than $c$ fraction of the clauses in $\psi$, then Alice and Bob can use it to win with probability higher than $\frac{2 + c}{3}$. Using the PCP theorem, these observations, in addition to the fact that 
$\psi \in 3SAT 	\Leftrightarrow {\omega}(G_{\Ha}) = 1$, imply that it is NP-hard to decide if $G_{\Ha}$ can be won with probability one or with probability at most $\frac{2 + c}{3}$.

\subsection{Hardness Result}
If the game is made promise-free, then it follows that it is NP-hard to decide if $\omega_{U}(G_{\Ha}) = 1$ or $\omega_{U}(G_{\Ha}) \leq 1 - (\frac{1-c}{n})$.

\begin{prop}\label{prop: Hardness-result}
It is NP-hard to decide if the sum capacity of the MAC associated with the promise-free version of $G_{\Ha}$ is equal to its maximum value $\log{m} + \log{n}$ or it is bounded away from it by $\Theta (\frac{1}{n^3})$.
\end{prop}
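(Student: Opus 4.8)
The plan is a reduction from the gap version of 3SAT underlying the promise-free game $G_{\Ha}$: by the discussion preceding the statement, it is NP-hard to decide whether $\omega_U(G_{\Ha}) = 1$ or $\omega_U(G_{\Ha}) \le \omega^* \coloneqq 1 - (1-c)/n$. I would show that the first alternative forces $\Csum(N_{G_{\Ha}}) = \log m + \log n$ and the second forces $\Csum(N_{G_{\Ha}}) \le \log m + \log n - \Theta(1/n^3)$; since these are mutually exclusive, any algorithm deciding the value of $\Csum(N_{G_{\Ha}})$ (equivalently, membership of a rate pair in $\cC(N_{G_{\Ha}})$) up to additive error $\Theta(1/n^3)$ would solve the NP-hard gap problem, which proves the claim.

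\emph{Completeness.} If $\omega_U(G_{\Ha}) = 1$ then, the optimum being attained at an extreme point, there is a perfect deterministic classical strategy. Let both senders draw their questions $X_1,X_2$ uniformly and independently and append the prescribed answers; the game is always won, so $Z=(X_1,X_2)$ is a deterministic function of the questions and uniform on $\cX_1\times\cX_2$. Evaluating the single-letter constraints \eqref{eq:cap-region-constraints} with $A = X_1Y_1$, $B = X_2Y_2$ gives $I(A;Z|B) = H(X_1) = \log m$, $I(B;Z|A) = H(X_2) = \log n$, and $I(AB;Z) = H(Z) = \log m + \log n$ (this is \Cref{prop:sum-capacity} with $p_L = 0$), so $\cR(A,B)$ contains $(\log m,\log n)$ and $\Csum(N_{G_{\Ha}})$ equals its trivial maximum $\log m + \log n$.

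\emph{Soundness.} Assume $\omega_U(G_{\Ha}) \le \omega^*$, so the uniform losing probability is at least $\mu \coloneqq (1-c)/n$. Fix a product question distribution and classical strategy with losing probability $p_L$. If $p_L > \mu$, then \Cref{prop:sum-capacity} gives $I(X_1Y_1X_2Y_2;Z) = H(Z) - p_L\log(mn) \le (1-p_L)\log(mn) < \omega^*\log(mn) \le \log(mn) - \mu$, which already beats $\log(mn) - \Theta(1/n^3)$. Otherwise $p_L \le \mu$, and I would invoke \Cref{prop:classical-bound} with $\delta \coloneqq \mu^3 = \Theta(1/n^3)$, which is admissible since $\delta < \mu < -\log\omega^* \le -\log\omega_U(G_{\Ha})$. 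In the proof of \Cref{prop:classical-bound}, inequality \eqref{eq:gamma-inequality} reads $\gamma(p_L) \ge \delta(p_L\|q_L)$ with $q_L \ge 1-\omega_U(G_{\Ha}) \ge \mu$; since $y\mapsto\delta(p_L\|y)$ is increasing for $y \ge p_L$, this yields $\gamma(p_L) \ge \delta(p_L\|\mu)$, where $\gamma(x) = (\delta + h(x))/(1-x)$. It therefore suffices to lower-bound the crossing point $\eps^*$ of $\gamma(\cdot)$ with $x\mapsto\delta(x\|\mu)$ on $(0,\mu)$ and then conclude $I(X_1Y_1X_2Y_2;Z) \le \max\{(1-\eps^*)\log(mn), \log(mn)-\delta\}$ exactly as in that proposition. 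Mirroring the argument in the proof of \Cref{prop:limited-entanglement-assistance}: if $\eps^* < \delta$, then $\gamma(\eps^*) \le \gamma(\delta) \le (\delta + a\delta^{\alpha})/(1-\delta)$ using $h(x) \le a x^{\alpha}$ on $[0,\tfrac12]$ for a fixed $\alpha\in(\tfrac23,1)$ and constant $a$, while Pinsker's inequality gives $\delta(\eps^*\|\mu) \ge \tfrac{2}{\ln 2}(\mu-\eps^*)^2 \ge \tfrac{2}{\ln 2}(\mu-\delta)^2 = \Theta(\mu^2)$; since $\delta^{\alpha} = \mu^{3\alpha} = o(\mu^2)$ as $n\to\infty$ (because $3\alpha > 2$), the equality $\gamma(\eps^*) = \delta(\eps^*\|\mu)$ is impossible for all large $n$, so $\eps^* \ge \delta$. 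Then $(1-\eps^*)\log(mn) \le \log(mn) - \eps^*\log(mn) \le \log(mn) - \delta$ as well, so $I(X_1Y_1X_2Y_2;Z) \le \log(mn) - \delta$ for this strategy; since the sum rate is a linear functional, the supremum over product distributions and the convex hull do not raise it, whence $\Csum(N_{G_{\Ha}}) \le \log m + \log n - \mu^3 = \log m + \log n - \Theta(1/n^3)$.

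The main obstacle is this last step: honestly converting the $\Theta(1/n)$ winning-probability gap into a sum-rate gap through \Cref{prop:classical-bound}. The delicate point is that $x\mapsto\delta(x\|\mu)$ has infinite slope at $x=0$, so the comparison of the two sides of \eqref{eq:eps-from-delta} near $x=0$ must be arranged carefully — by taking $\delta$ to be a suitable power of $\mu$ (here $\mu^3$, balancing the power-function bound on $h$ against Pinsker's quadratic lower bound) — so that it persists as $n\to\infty$. The remaining ingredients (the perfect-strategy construction, \Cref{prop:sum-capacity}, and the convex-hull remark) are routine.
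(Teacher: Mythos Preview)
Your proposal is correct and follows essentially the same approach as the paper: both arguments reduce from the NP-hard gap for $\omega_U(G_{\Ha})$, handle completeness via a uniform-question perfect strategy, and handle soundness by choosing $\delta = \Theta(1/n^3)$ in \Cref{prop:classical-bound}, then proving $\eps^*\ge\delta$ by contradiction using the power bound $h(x)\le a x^{\alpha}$ on the left and Pinsker's inequality on the right, comparing decay rates as $n\to\infty$. The only cosmetic differences are that the paper takes $\delta=(1-c)/n^3$ and fixes $\alpha=5/6$, whereas you take $\delta=\mu^3=((1-c)/n)^3$ and allow any $\alpha\in(2/3,1)$; you are also slightly more explicit about the $p_L>\mu$ case and the convex-hull step, but none of this changes the substance.
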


\begin{proof}[Proof of \Cref{prop: Hardness-result}]
Observe that if $\psi$ has a satisfying assignment, then the two senders can use the channel perfectly, i.e., $R_1 = \log m$ and $R_2 = \log{n}$. On the other hand, if $\psi$ has no satisfying assignment, then $\omega_{U}(G_{\Ha})$ is strictly less than 1. Hence, we can use \cref{prop:classical-bound} to make statements about the sum capacity in a manner similar to \eqref{prop:limited-entanglement-assistance}. Let $\delta = \frac{(1 - c)}{n^3}$ and assume that $\eps^* < \delta$. For large $n$, we can overestimate the left-hand side of \cref{eq:eps-from-delta} by 
\begin{align}
\frac{\delta+b \delta^{\beta}}{1-\delta} \geq \delta(\eps^*\|1-\omega_{U}(G_{\Ha}))	,
\end{align}
where $\beta = \frac{5}{6}$ and $b$ is taken to be large enough. Again, we use Pinsker's inequality to lower bound the right hand side.
\begin{align}
\frac{\delta+b \delta^{\beta}}{1-\delta} \geq 
\frac{2(1-c)^{2}}{\ln{2}} \Big[ \frac{1}{n^2} - 2 \frac{1}{n^4} \Big]
\end{align}
As $n$ goes to infinity, the right-hand side goes as $1/n^2$, while the left-hand side goes as $1/n^{2.5}$. $\eps^*$ cannot be smaller than $\delta$ for large enough $n$. Therefore, whenever $\omega_{U}(G_{\Ha}) < 1$, i.e., $\psi$ has no satisfying assignment, we conclude from \cref{eq:capacity-upper-bound} that for all large enough $n$, 

\begin{align}
I(X_1 Y_1 X_2 Y_2;Z) \leq \log m + \log n - 
\frac{(1 - c)}{n^3}
\end{align}
The proposition follows from here via the PCP theorem. 
\end{proof}
It is instructive to compare this hardness result with the time complexity of the popular Arimoto-Blahut (AB) algorithm for computing the point-to-point discrete channel capacity \cite{Arimoto72, Blahut72}. If we consider the two senders together, then the channel capacity is the solution to a convex program. The number of iterations needed in order to have $\frac{\epsilon}{n^3}$ additive precision for the capacity using the AB algorithm is $\frac{\mathcal{O}(n^3 \log{n})}{\epsilon}$ in the worst case. Assuming P $\neq$ NP, there is no polynomial-time algorithm to get to within the same precision for the boundary of the capacity region of an arbitrary discrete MAC. Moreover, assuming the exponential time hypothesis \cite{ETH}, there is no sub-exponential algorithm to compute the boundary of the region to inverse cubic precision. In such a case, one may consider the ``naive'' method of covering the space of product probability distributions with a net and computing an approximation of the capacity region.
We argue below that, assuming the validity of the exponential time hypothesis, this net covering method is not far from optimal.

Let $K\subseteq \mathbb{R}^n$ be a subset of the Euclidean space $\mathbb{R}^n$ and $\eps>0$.
An \textit{$\eps$-net} for $K$ is a subset $N\subseteq K$ such that every point of $K$ is within distance $\eps$ of a net point in $N$.
We denote by $C(K,\eps)$ the \textit{covering number} of $K$, defined as the smallest possible cardinality of an $\eps$-net $N$ for $K$.
By a standard volume argument, $C(K,\eps)$ is bounded from below as
\begin{align}
C(K,\eps) \geq \frac{|K|}{|\cB_\eps^n|},\label{eq:covering}
\end{align}
where $|K|$ denotes the (Euclidean) volume of $K$ embedded in $\mathbb{R}^n$, and $\cB_\eps^n$ is the $n$-ball with radius $\eps$.
Let now $K=\Delta_n$ be the $n$-probability simplex, and recall that 
\begin{align}
|\Delta_n| = \frac{\sqrt{n}}{(n-1)!} \qquad \text{and}\qquad |\cB_\eps^n| = \frac{\pi^{n/2}}{\Gamma(\frac{n}{2}+1)} \eps^n.
\end{align}
Here, $\Gamma(\cdot)$ is the well-known Gamma function, satisfying $\Gamma(n)=(n-1)!$ for $n\in\mathbb{N}$.
Using the Stirling approximation $n!\sim \sqrt{2\pi n} \left(\frac{n}{e}\right)^n$ as well as $\eps = \poly(n)^{-1}$, we obtain from \eqref{eq:covering} that $C(K,\eps) =\Omega(\poly(n)^n)$.

\end{document}